\documentclass{article}
\usepackage{kr}
\usepackage{thmtools}
\usepackage{thm-restate}

\usepackage{times}
\usepackage{soul}
\usepackage{url}
\usepackage[hidelinks]{hyperref}
\usepackage[utf8]{inputenc}
\usepackage[small]{caption}
\usepackage{graphicx}
\usepackage{amsmath}
\usepackage{amsthm}
\usepackage{amssymb}
\usepackage{booktabs}
\usepackage{algorithm}
\usepackage{algorithmic}
\usepackage{adjustbox}
\usepackage{makecell}
\usepackage{listings}
\usepackage{tikz}
\usepackage{array}
\usetikzlibrary{arrows.meta}
\usetikzlibrary{automata,positioning}
\usetikzlibrary{calc}
\newtheorem{example}{Example}

\newcommand{\anonymize}[2]{#1}
\newcommand{\anonAuth}[1]{\anonymize{#1}{Anonymous Author}}
\newcommand{\anonAff}[1]{\anonymize{#1}{Anonymous Affiliation}}
\newcommand{\anonEmails}[1]{\anonymize{#1}{Anonymous Emails}}

\title{NILE:  Formalizing Natural-Language Descriptions of Formal Languages}

\author{%
    \anonAuth{Tristan Kneisel}$ ^1 $\and
    \anonAuth{Marko Schmellenkamp}$ ^1 $\and
    \anonAuth{Fabian Vehlken}$^1$\and
    \anonAuth{Thomas Zeume}$^1$\\
\affiliations
$^1$\anonAff{Ruhr University Bochum, Germany}\\
\emails
\anonEmails{\{tristan.kneisel, marko.schmellenkamp, fabian.vehlken, thomas.zeume\}@rub.de}}

\newif\ifcomments
\newif\ifchanges
\commentsfalse\changesfalse
\commentstrue\changestrue

\usepackage{xspace}
\usepackage[capitalise]{cleveref}

\newcommand{\N}{\ensuremath{\mathbb{N}}}

\newcommand{\df}{\ensuremath{\mathrel{\smash{\stackrel{\scriptscriptstyle{
    \text{def}}}{=}}}} \;}

\newcommand  {\myclass} [1]  {\ensuremath{\textsf{\upshape #1}}}

\newcommand{\StaClass}[1]{\myclass{#1}\xspace}

\newcommand{\CQ}[1][]{\StaClass{CQ}}
\newcommand{\UCQ}[1][]{\StaClass{UCQ}}
\newcommand{\CQneg}[1][]{\StaClass{CQ\ensuremath{^{\mneg}}}}
\newcommand{\UCQneg}[1][]{\StaClass{UCQ\ensuremath{^{\mneg}}}}

\theoremstyle{plain}

\theoremstyle{definition}

\newtheorem*{question*}{Question}
\newtheorem*{openquestion*}{Open question}

\newenvironment{proofsketch}{\begin{proof}[Proof sketch.]}{\end{proof}}

\providecommand {\calD}      {{\mathcal D}\xspace}

\providecommand {\calI}      {{\mathcal I}\xspace}
\providecommand {\calL}      {{\mathcal L}\xspace}
\providecommand {\calM}      {{\mathcal M}\xspace}

\providecommand {\calR}      {{\mathcal R}\xspace}
\providecommand {\calS}      {{\mathcal S}\xspace}

\ifcomments
\newcommand{\commentbox}[1]{\noindent\framebox{\parbox{0.98\linewidth}{#1}}}

\newcommand{\acomment}[2]{\ \\ \fbox{\parbox{0.98\linewidth}{{\sc #1}: #2}}}
\newcommand{\mcomment}[2]{{\color{blue}(#1)}\footnote{#1: #2}} %
\else
\newcommand{\commentbox}[1]{}
\newcommand{\mcomment}[2]{}
\newcommand{\acomment}[2]{}
\fi

\ifchanges

\setul{}{0.2mm}
\setstcolor{red}

\else

\fi

\newif\ifwithappendix

\withappendixtrue

\ifwithappendix

\newcommand{\insertIfWithAppendix}[1]{#1}
\else
\newcommand{\insertIfWithAppendix}[1]{\!}
\fi

\newcommand{\Nile}{\textsc{Nile}\xspace}
\newcommand{\Mosel}{\textsc{Mosel}}
\newcommand{\iffdef}[0]{\overset{\text{def}}{\iff}}
\newcommand{\eqdef}[0]{\overset{\text{def}}{=}}
\newcommand{\eqvdef}[0]{\overset{\text{def}}{\equiv}}

\begin{document}

\maketitle

\begin{abstract}
  This paper explores how natural-language descriptions of formal languages can be compared to their formal representations and how semantic differences can be explained. This is motivated from educational scenarios where learners describe a formal language (presented, e.g., by a finite state automaton, regular expression, pushdown automaton, context-free grammar or in set notation) in natural language, and an educational support system has to (1) judge whether the natural-language description accurately describes the formal language, and to (2) provide explanations why descriptions are not accurate.

  To address this question, we introduce a representation language for formal languages, \textsc{Nile}, which is designed so that \textsc{Nile} expressions can mirror the syntactic structure of natural-language descriptions of formal languages. \textsc{Nile} is sufficiently expressive to cover a broad variety of formal languages, including all regular languages and fragments of context-free languages typically used in educational contexts. Generating \textsc{Nile} expressions that are syntactically close to natural-language descriptions then allows to provide explanations for inaccuracies in the descriptions algorithmically.

  In experiments on an educational data set, we show that LLMs can translate natural-language descriptions into equivalent, syntactically close \textsc{Nile} expressions with high accuracy -- allowing to algorithmically provide explanations for incorrect natural-language descriptions. Our experiments also show that while natural-language descriptions can also be translated into regular expressions (but not context-free grammars), the expressions are often not syntactically close and thus not suitable for providing explanations.

\end{abstract}

\section{Introduction}\label{section:introduction}
Learning to construct formal representations for scenarios described in natural language and to understand such formal  representations is an important part of CS higher education \cite{ACM2024,GI2016}. For instance, typical courses on formal foundations of CS at the Bachelor level cover formal language representations such as automata (deterministic and non-deterministic finite automata, push-down automata, and Turing machines), regular expressions, and formal grammars.  Two exercise types likely to be encountered by students are the following:
\begin{description}
	\item[(E1)] Construction: Given a natural description of a language $L$, construct a formal representation (automaton, regular expression, grammar, \dots) $\mathcal{R}$ such that $\mathcal{L}(\mathcal{R}) = L$.
	\item[(E2)] Interpretation: Given a formal representation $\mathcal{R}$, provide a natural description $\mathcal{D}$ of its language $\mathcal{L}(\mathcal{R})$.
\end{description}
While construction exercises are broadly covered for formal languages by educational support systems such as  AutomataTutor \cite{AlurDGKV2013,DAntoniHKRW20}, Iltis \cite{SchmellenkampVZ24}, and JFLAP \cite{Rodger06}, there has been no work on helping students at interpreting formal representations in such systems to the best of our knowledge. This is likely because processing natural-language submissions is more challenging than processing submissions of formal representations.

This paper aims to address this gap by laying the groundwork for exercises supporting students to describe formal languages in educational support systems.

For supporting such educational tasks, natural-language descriptions $\mathcal{D}$ (provided by students) for a formal representation $\mathcal{R}$ (provided as part of the assignments by instructors) need to be checked for correctness and, if necessary, suitable explanations for inaccurate descriptions be provided. A na\"{\i}ve approach is to (a) translate $\mathcal{D}$ into a formal representation $\mathcal{R}_\mathcal{D}$, (b) check for equivalence of  $\mathcal{R}$ and  $\mathcal{R}_\mathcal{D}$, and (c) compute feedback using $\mathcal{D}$, $\mathcal{R}$ and $\mathcal{R}_\mathcal{D}$. Experiments (see Section \ref{section:evaluation}) show that while current LLMs perform well on Task (a) for regular expressions, they do not perform so well for context-free representations. Furthermore,  regular expressions generated for natural-language descriptions also tend to be syntactically far away from the descriptions, leaving it unclear how to explain inaccuracies to students. %

\paragraph*{Contributions}

We address this shortcoming of existing formal representations of formal languages by introducing a descriptive formal representation language, \textsc{Nile}, whose expressions can both (i) mirror closely  the syntactical structure of natural-language descriptions and (ii) be easily translated into conventional formal representations.
Due to these properties, \Nile is well suited to check student solution attempts and provide explanations.
Natural language descriptions $\mathcal{D}$ (provided by students) can be translated into \textsc{Nile} expressions $\mathcal{N}_\mathcal{D}$ whose syntactic structure is close to $\mathcal{D}$. %
Correctness of solution attempts can be checked by testing equivalence of $\mathcal{N}_\mathcal{D}$ and a \textsc{Nile} expression $\mathcal{N}$ for the solution language (provided by an instructor). 
If necessary, explanations can be provided by analysing differences of the syntax trees of $\mathcal{N}_\mathcal{D}$ and $\mathcal{N}$ with suitable algorithms.

We implement \Nile and show that it captures all regular languages and bounded context-free languages. The latter class includes many typical context-free languages used in educational settings, such as $\{a^nb^m(ab)^{n+m}\mid m,n\in\N\}$.

We then explore how well LLMs can test correctness of natural-language descriptions of formal languages and translate them into regular expressions, context-free grammars, and \Nile expressions. Our focus is on translations that preserve the syntactic structure (as such translations are the basis for explaining semantic differences) and whether \Nile performs better than classical representations.

\newtheorem{researchquestion}{Research question}

\begin{researchquestion}
	How well can LLMs decide whether a natural-language description fits to a formal language?
\end{researchquestion}

\begin{researchquestion}
	How well can LLMs construct a formal representation that semantically matches a natural-language description?
\end{researchquestion}
\begin{researchquestion}[Main]
	How well can LLMs construct a formal representation that syntactically matches a natural-language description?
\end{researchquestion}

We address the questions by evaluating answers from OpenAI’s gpt-oss-120b model on 2,280 natural-language descriptions by students for 6 regular and 5 context-free languages in different formal representations (see \cref{table:exercises}).

The results are summarized in \cref{table:summary}. In a nutshell, the LLM performs well in deciding how well a natural-language description fits a formal language and it can also translate natural-language descriptions into the formal representations, with translations to \Nile performing slightly better than translations into regular expressions and context-free languages. However, the syntactic structure of natural-language descriptions is only matched with high accuracy if they are translated into \Nile expressions.

\paragraph*{Related Work} \textsc{Mosel} is another language for describing regular languages with highlevel constructs in educational settings \cite{AlurDGKV2013}. Designed for providing feedback and grading automata constructions, its expressions tend to be syntactically different from natural-language descriptions as it is based on monadic second-order logic extended by indexing operators.  Small Bert-based models have been used to provide feedback for educational tasks on designing logical vocabularies \cite{KneiselVZ25}. %

There has also been work on translating natural language into regular expressions, see e.g.\ \cite{ZhangGCS23}. %
The focus there is typically to find correct regular expressions, mostly for usage in code. In contrast to our use case, staying close to the natural-language description is not a priority.

\section{The \Nile language}\label{section:nile}
The design goals for \Nile (short for ``Natural, Intuitive Language Expressions'') are (i) compositionality, i.e.~simple expressions easily compose to more complex expressions, (ii) expressive power, i.e.~a broad variety of typical regular and context-free languages from educational contexts can be expressed, (iii) closeness to natural language, i.e.~constructs used in natural-language descriptions can be mirrored in the syntax of \Nile, (iv) extendability, i.e.~further syntactical constructs to cover natural language constructs can be easily added, and (v) algorithmic accessibility, i.e.~the language allows for easy conversion into other formal representations and standard algorithmic problems are (efficiently) solvable.

\Nile uses language expressions (atoms and composite expressions) to describe formal languages and unary number predicates to describe sets of natural numbers (which can be used in language expressions). Formally, language expressions are evaluated wrt.~a string $w$ and an interpretation $\beta$ of free natural number variables, and unary number predicates are evaluated wrt.~a natural number $n$ and $\beta$. For simplicity, we denote $\bar w \df (w, \beta)$ and $\bar n \df (n, \beta)$ for $w \in \Sigma^*$, $n \in \N_0$ and some interpretation $\beta$.

The syntax and semantics of \Nile language expressions and unary number predicates is as follows:
\begin{itemize}
	\item[(1)] \Nile atoms: $\bar w \models \sigma_1\sigma_2...\sigma_k \iffdef w = \sigma_1\sigma_2...\sigma_k \in \Sigma^*$

	\item[(2)] Composite \Nile expressions using $\Nile$ expressions $\varphi$ and $\psi$, and a unary number predicate $P$:
		\begin{itemize}
			\item[(a)] Negations: $\bar w \models \lnot \varphi \iffdef \bar w \not\models \varphi$
			\item[(b)] Disjunctions: $\bar w \models \varphi \lor \chi \iffdef (\bar w \models \varphi) \lor (\bar w \models \chi)$
			\item[(c)] Concatenations: $\bar w \models \varphi \circ \chi \iffdef \exists i \in [1, \left|w\right| + 1]:$ \\
					\hspace*{0.2cm} $((w[1, i - 1], \beta) \models \varphi) \land ((w[i, \left|w\right|], \beta) \models \chi)$
			\item[(d)] Repetitions: $\bar w \models \mathtt{REP}(P, \varphi)$\\ $\hspace*{1.6cm}\iffdef \exists n \in \mathbb{N}_0:  (\bar n \models P) \land (\bar w \models \bigcirc_{i=1}^n \varphi)$
			\item[(e)] Counted occurences: $\bar w \models \mathtt{HAS}(P, \varphi) \iffdef \left|I\right| \models P$
 	\\ \hspace*{0.2cm}\text{where } $I \eqdef \{i \in [1, \left|w\right| + 1] \mid (w[i, \left|w\right|], \beta) \models \varphi \circ \top\}$

			\item[(f)] Quantification of numbers:\\
		$\bar w \models \exists x~\varphi \iffdef \exists n \in \N_0: (w, \beta[x/n]) \models \varphi$

		\end{itemize}
		\item[(3)] Unary number predicates are Boolean combinations of
				\begin{itemize}
				\item $\bar n \models (\ge t) \iffdef n \ge t[\beta]$
				\item $\bar n \models (\equiv t \mod c) \iffdef n \equiv t[\beta] \mod c $
				\end{itemize}
		where $t$ is a term of the form $\sum_i c_ix_i + d$ with number constants $c_i$, $d$ and variables $x_i$;  $c$ is a number constant; and $t[\beta]$ is the evaluation of $t$ under interpretation $\beta$ of the free variables.
\end{itemize}

We use the usual syntactic sugar $\bot$, $\top$, $\wedge$, $\rightarrow$, $\leftrightarrow$ resp. $>$, $=$, $<$, $\le$ for Boolean combinations of \Nile language expressions and number predicates.

The above constructs constitute the \emph{core \Nile language}. The \emph{extended \Nile language} allows further syntactic sugar definable within the core language, for instance:

\begin{itemize}
 \item $\mathtt{BEGIN}(P,\varphi) \eqvdef \mathtt{REP}(P,\varphi) \circ ( \varepsilon \lor \lnot(\varphi \circ \top))$
	\item $\mathtt{COUNT}(\varphi) \hspace{-1mm}>\hspace{-1mm} \mathtt{COUNT}(\psi)\eqvdef \exists x \big(\mathtt{HAS}(>\hspace{-1mm}x, \varphi)\wedge \mathtt{HAS}(\leq\hspace{-1mm}x, \psi)\big)$
\end{itemize}

\newcommand{\red}[1]{\textcolor{red}{#1}}
\newcommand{\black}[1]{\textcolor{black}{#1}}
We give some examples of authentic \Nile expressions within the context of our educational application.%

\begin{example}\label{example:nile:has}
The regular language given by the automaton
		\scalebox{0.5}{
		\tikz[baseline=-0.5ex,shorten >=1pt, node distance=1.0cm, on grid, auto,
		state/.style={circle, draw, minimum size=4.5mm, inner sep=0pt}
		]{		\node[state] (q0) {$q_0$};
				\draw[->] ($(q0.south)+(0mm,-3mm)$) -- (q0.south);
				\node[state] (q1) [right=of q0] {$q_1$};
				\node[state] (q2) [right=of q1] {$q_2$};
				\node[state, accepting] (q3) [right=of q2] {$q_3$};

				\path[->]
					(q0) edge[loop left] node {$a,b$} ()
								edge[] node {$a$} (q1)
					(q1) edge node {$a$} (q2)
					(q2) edge node {$a$} (q3)
					(q3) edge[loop right] node {$a,b$} ();
		}} may  be (wrongly) described by a student as  ``Strings with three $a$\textup{'s}'' expressible in \Nile as
		\[\mathtt{HAS}(=3,a)\]
		Comparison with a correct description ``Strings with three consecutive $a$\textup{'s}'' expressible in \Nile as
			\[\mathtt{HAS}(\red{\mathtt{REP}(\black{=3, a})})\]
		yields the explanation that the consecutiveness has not been taken into account. This explanation can be derived algorithmically from the syntax trees of the \Nile expressions.
\end{example}
\begin{example}\label{example:nile:quantifier}
The context-free language given by the grammar $S \rightarrow aSab \mid X; X \rightarrow bXab \mid \varepsilon$ may be (wrongly) described as  ``Strings starting with an $a$, followed by a sequence of $b$\textup{'s}, followed by a sequence of $ab$\textup{'s}, such that there is one more $ab$ than $b$\textup{'s}'' expressible in \Nile as
			\[\exists i[a \circ \mathtt{REP}(=i,b) \circ \mathtt{REP}(=i+1,ab)]\]
		Comparison with a correct description ``Strings starting with $a$\textup{'s}, followed by a sequence of $b$\textup{'s}, followed by a sequence of $ab$\textup{'s}, such that there are exactly as many $ab$\textup{'s} as the sum of the number of $a$\textup{'s} and $b$\textup{'s}'' expressible in $\Nile$ as
		\[\exists i\red{\exists j} [\red{\mathtt{REP}(=j,\black{a})} \circ \mathtt{REP}(=i,b) \circ \mathtt{REP}(=i+\red{j},ab)]\]
		yields an explanation based on the counts of $a$\textup{'s}, $b$\textup{'s}, and $ab$\textup{'s} derivable from the \Nile expressions.
\end{example}

We next explore the expressive power of \Nile wrt. classical formal language representations. \emph{Regular \textsc{Nile}} contains all rules of core-\textsc{Nile} except for the quantification rule $\exists x \varphi$.

\begin{restatable}{theorem}{theoremNileRegular}
\label{theorem:nile:regular}
	A language is expressible in regular \Nile iff it is regular.%

\end{restatable}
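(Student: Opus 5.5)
Both directions can be proved by structural induction. The key observation is that in regular \Nile there are no quantifiers, so every expression is closed. Every term $t$ inside a number predicate is then a constant $d$, and every unary number predicate is a Boolean combination of $(\ge d)$ and $(\equiv d \mod c)$. Such predicates define exactly the ultimately periodic subsets of $\N_0$.

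\emph{Regular languages are expressible.} I would translate a regular expression inductively:
\begin{itemize}
\item a letter $\sigma$ becomes the atom $\sigma$;
\item $\varepsilon$ becomes the empty atom, and $\emptyset$ becomes $\bot$;
\item union becomes $\lor$;
\item concatenation becomes $\circ$;
\item Kleene star $r^*$ becomes $\mathtt{REP}(\ge 0, \varphi_r)$.
\end{itemize}
Correctness follows directly from the semantics (2)(b)--(d). The only point to check is that $\bigcirc_{i=1}^0\varphi$ denotes $\{\varepsilon\}$.

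\emph{Regular \Nile languages are regular.} I would show by induction on $\varphi$ that $\mathcal{L}(\varphi)$ is regular.
\begin{itemize}
\item Atoms denote finite languages.
\item For negation, disjunction and concatenation, use the closure of regular languages under complement, union and concatenation.
\item For $\mathtt{REP}(P,\varphi)$, the set $S=\{n \mid n \models P\}$ is ultimately periodic. Write $S = F \cup \bigcup_{j} (d_j + p\N_0)$ with $F$ finite. Then $\mathcal{L}(\mathtt{REP}(P,\varphi)) = \bigcup_{n\in S} L^n$ with $L=\mathcal{L}(\varphi)$, which equals a finite union of sets $L^n$ ($n\in F$) and sets $L^{d_j}(L^p)^*$. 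Each of these is regular.
\end{itemize}

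The main obstacle is the $\mathtt{HAS}(P,\varphi)$ case, since it counts positions rather than combining languages. Set $K = \mathcal{L}(\varphi\circ\top) = L\Sigma^*$, which is regular, and fix a DFA $A$ for $K$. For a word $w$, the count $|I|$ is the number of suffixes of $w$ that lie in $K$. Equivalently, it is the number of prefixes of $w^R$ that lie in $K^R$, and $K^R$ is regular. A DFA reading $w^R$ can simulate a DFA $B$ for $K^R$ and keep the number of accepting positions seen so far, including the empty prefix, i.e.\ position $|w|+1$. Since $S$ is ultimately periodic with threshold $N$ and period $p$, the counter only needs to be stored modulo this structure, as a value in $\{0,\dots,N+p\}$ that folds back periodically. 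Accepting when the counter lies in $S$ gives a finite automaton for $\mathcal{L}(\mathtt{HAS}(P,\varphi))^R$. Closure under reversal then yields regularity.

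I would double-check the boundary convention that $i$ ranges over $[1,|w|+1]$, so that the empty suffix is counted. This completes the induction.
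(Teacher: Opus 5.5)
Your proof is correct, but in the nontrivial direction it takes a different route from the paper. The paper's sketch notes that terms in regular \Nile are constants, asserts that every construct ``can be easily translated'' into MSO, and concludes by the B\"uchi--Elgot--Trakhtenbrot theorem. You stay within automata theory instead:
\begin{itemize}
\item you isolate the fact that closed number predicates denote ultimately periodic sets;
\item you obtain $\mathtt{REP}$ from the identity $\bigcup_{n\in S}L^n=\bigcup_{n\in F}L^n\cup\bigcup_j L^{d_j}(L^p)^*$;
\item you obtain $\mathtt{HAS}$ from a threshold/period counter run on $w^R$ alongside a DFA for $(L\Sigma^*)^R$, followed by closure under reversal.
\end{itemize}

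The paper's route is shorter. It also extends at once to any further MSO-definable construct, e.g.\ $\mathtt{AT}$ or $\mathtt{RANGE}$ with constant indices, or $\mathtt{REVERSE}$. But the omitted translation is exactly where the work lies. Subformulas must be relativised to factors, and the number of cut points and of suffix positions must be checked against threshold and modular constraints via set quantification. There are also two edge cases. When $\varepsilon\in\mathcal{L}(\varphi)$, a factorisation into $n$ factors is not determined by a set of cut positions. And the index $|w|+1$ in $\mathtt{HAS}$ is not a word position. An MSO encoding therefore needs extra case distinctions, which your algebraic identity and your initial-state convention handle uniformly.

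Your argument is self-contained, uses only standard closure properties, and gives a direct automaton construction. It also makes precise what the absence of quantifiers buys: each predicate denotes a fixed ultimately periodic set. Neither route avoids the non-elementary blow-up the paper mentions, which is already forced by negation.

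One small wording point: the absence of quantifiers does not by itself make an expression closed. What you (and the paper) actually use is the convention that regular \Nile expressions contain no variables. Equivalently, for a fixed $\beta$, every term can be replaced by the constant $t[\beta]$, which reduces to your closed case.
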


Core-\Nile can express all bounded context-free languages, in particular the non-regular language $\{a^nb^n \mid n \in \N\}$.  Here, a language $L \subseteq \Sigma^*$ is called \emph{bounded} if there are $w_1, \ldots, w_m \in \Sigma^*$ so that $ L \subseteq \calL(w_1^* \ldots w_m^*)$.

\begin{restatable}{theorem}{theoremNileBounded}
\label{theorem:nile:bounded}
	Every bounded context-free language is expressible in core-\Nile.
\end{restatable}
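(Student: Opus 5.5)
We rely on the classical characterization of bounded context-free languages due to Ginsburg: a language $L \subseteq w_1^* \cdots w_m^*$ is context-free iff the set
\[E(L) = \{(n_1,\ldots,n_m) \in \N_0^m \mid w_1^{n_1}\cdots w_m^{n_m} \in L\}\]
is a finite union of \emph{stratified} linear sets. For our purposes it suffices to use the weaker consequence, via Parikh's theorem applied to a suitable transduction, that $E(L)$ is semilinear, i.e.\ a finite union of linear sets $\{v_0 + \sum_{j=1}^k y_j v_j \mid y_j \in \N_0\}$ with $v_j \in \N_0^m$. The plan is to express each linear set by a single existentially quantified \Nile expression and then take the finite disjunction.

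\textbf{Step 1: reduction to linear sets.} Fix $L \subseteq \calL(w_1^*\cdots w_m^*)$ context-free. Then $L = \{w_1^{n_1}\cdots w_m^{n_m} \mid (n_1,\ldots,n_m)\in E(L)\}$, and $E(L)$ is semilinear. A direct argument: the map $h\colon \{c_1,\ldots,c_m\}^* \to \Sigma^*$ with $c_i \mapsto w_i$ gives $L' = h^{-1}(L) \cap c_1^*\cdots c_m^*$. This language is context-free by closure under inverse homomorphism and intersection with regular languages. Its Parikh image is exactly $E(L)$, because $h$ restricted to $c_1^*\cdots c_m^*$ maps onto $L$. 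Parikh's theorem then gives semilinearity. Note that ambiguity is harmless: the same word may arise from several tuples, but $L$ is still the image of $E(L)$.

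\textbf{Step 2: encoding a linear set.} For a linear set with base $v_0$ and periods $v_1,\ldots,v_k$, write
\[\exists y_1 \cdots \exists y_k\, \Big[\mathtt{REP}\big(= t_1, w_1\big) \circ \cdots \circ \mathtt{REP}\big(= t_m, w_m\big)\Big],\]
where $t_i = \sum_j (v_j)_i\, y_j + (v_0)_i$. This is a legal term of the form $\sum c_i x_i + d$, and $=t$ is the Boolean combination $(\ge t)\wedge\lnot(\ge t+1)$. Unfolding the semantics of $\mathtt{REP}$, concatenation and quantification shows that $w$ satisfies the expression iff $w = w_1^{n_1}\cdots w_m^{n_m}$ for some $(n_1,\ldots,n_m)$ in the linear set. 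Empty words $w_i=\varepsilon$ cause no problem, since then $\mathtt{REP}(=t,\varepsilon)$ simply accepts $\varepsilon$. The disjunction over the finitely many linear sets expresses $L$.

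\textbf{Main obstacle.} The nontrivial ingredient is the semilinearity of $E(L)$; everything else is a routine check of the semantics. I would cite Parikh's theorem, or Ginsburg's theorem on bounded languages, rather than reprove it. The one point needing care is that $L'$ is indeed context-free and that its Parikh image equals $E(L)$. This needs a short remark that the Parikh vector of a word in $c_1^*\cdots c_m^*$ determines the word.
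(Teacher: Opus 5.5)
Your proposal is correct and follows essentially the same route as the paper: semilinearity of the exponent set, one existentially quantified concatenation $\bigcirc_{k} \mathtt{REP}(=t_k, w_k)$ per linear set, a finite disjunction over the linear sets, and $=t$ expressed as $(\ge t)\wedge\lnot(\ge t+1)$. The only difference is that you derive semilinearity yourself, via Parikh's theorem applied to $h^{-1}(L)\cap c_1^*\cdots c_m^*$. The paper instead simply cites Ginsburg for the representation $L=\{w_1^{x_1}\cdots w_m^{x_m}\mid (x_1,\ldots,x_m)\in S\}$ with $S$ semilinear.
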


Observe that there are also languages expressible in core-\Nile that are not context-free, e.g. $\{a^nb^nc^n \mid n \in \N_0\}$.

\section{Evaluation}\label{section:evaluation}

In this section we address our research questions by exploring how well LLMs can translate natural-language descriptions into classical representations of formal languages as well as into \Nile expressions. We describe our data set and its collection in Section \ref{section:dataset} and evaluate the dataset with respect to our three research questions in Section \ref{section:results}.

\subsection{Data set and Annotation}\label{section:dataset}
We describe (i) our educational data set, (ii) how LLM labels regarding our research questions were gathered, and (iii) how the resulting data was annotated for the evaluation.

\begin{table}
  \setlength{\tabcolsep}{2pt}
  \def\arraystretch{1.1}
	\small\centering
	\scalebox{0.85}{%
	\begin{tabular}{l>{}p{6.8cm}lrr}
	       & \small Language & \makebox[6mm][l]{Format} & $\#$ & \makebox[7mm][l]{$\# \calR_\calD$}\\
		\hline
		R1 & $\{ab^n \mid n \ge 1\}$ %
		  & Set & 276 & 28\\
		R2 & $\{\varepsilon\} \cup \{wa \mid w \in \{a, b\}^*\}$ %
		  & RE & 273 & 22\\
		R3 & $\{w \in \{a, b\}^* \mid w \text{ contains } aaa \}$ %
		  & NFA & 255 & 29\\
		C1 & $\{w \in \{a, b\}^* \mid \#_b(w) \ge \#_a(w)\}$ %
		  & CFG & 209 & 45\\
		C2 & $\{a^nb^n \mid n \in \mathbb{N}_0 \}$ %
		  & PDA & 201 & 17\\
		  \hline
		R4 & $\{a^mb^n \mid m \text{ even}, n \text{ odd} \}$ %
		  & Set & 205 & 18\\
		R5 & \mbox{$\{w \in \{a, b, c\}^* \mid \text{every } a \text{ is directly followed by a } b\}$}
			& DFA & 198& 38 \\
		R6 & $\{w \in \{a, b, c\}^* \mid \#_b(w) = 0 \lor w \text{ ends in } aba\}$ %
		  & NFA & 200 & 16\\
		C3 & $\{a^mb^n(ab)^{m+n} \mid m, n \in \mathbb{N}_0 \}$ %
		   & CFG & 162 & 35\\
		C4 & $\{wba^n \mid w \in \{a, b\}^*, n \in \mathbb{N}_0, \#_a(w) = n\}$ %
		   & CFG & 159 & 50\\
		C5 & $\{w \in \{a, b, c\}^* \mid \#_a(w) \le 2 \cdot \#_b(w)\}$ %
		   & PDA & 142 & 41\\
	   \hline
		$\Sigma$ & & & 2280 \\
    \end{tabular}%
	}
	\caption{Exercises in the dataset including format in which the language is presented to students, number of submissions ($ \# $), and number of non-equivalent \Nile expressions produced by the LLM in our experiments ($ \# \calR_\calD $) for the submissions.}
	\label{table:exercises}
\end{table}

\paragraph{Educational data set} Our data set contains 2,280 natural-language descriptions by students for 11 formal languages in different representations (set notations, regular expressions, finite state and pushdown automata, context-free grammars) provided by instructors, see \cref{table:exercises}. All languages are expressible in \Nile. Data for languages (R1) -- (R3), (C1), and (C2) was collected in winter term 2024 and used for designing experiments and prompts. Data for the remaining six exercises was collected in winter term 2025 and not examined before the evaluation.

\begin{table*}[th]
    \centering\small
    \renewcommand{\arraystretch}{1.0}
    \setlength{\tabcolsep}{5.3pt}
	\scalebox{1}{%
\begin{tabular}{l||>{\bfseries}r>{\bfseries}r|rrrr|rrrr}
	\toprule
	& \multicolumn{2}{c|}{\textbf{all}} & \multicolumn{4}{c|}{Regular languages} & \multicolumn{4}{c}{Context-free languages}  \\
	& \multicolumn{2}{c|}{} & \multicolumn{2}{c}{Design data} & \multicolumn{2}{c|}{Validation data} & \multicolumn{2}{c}{Design data} & \multicolumn{2}{c}{Validation data} \\
	& $\#$ & $\%$ & $\#$ & $\%$ & $\#$ & $\%$ & $\#$ & $\%$ & $\#$ & $\%$ \\
	\midrule
	\hspace{1mm} all descriptions & 2040 & 100\% & 711 & 100\% & 585 & 100\% & 348 & 100\% & 396 & 100\% \\
	\midrule
	\multicolumn{11}{l}{\textbf{RQ 1}: How well can LLMs decide whether a natural-language description fits to a formal language?} \\
	\midrule
	\hspace{1mm} (M1) Directly & 1956 & 95.88\% & 676 & 95.08\% & 564 & 96.41\% & 343 & 98.56\% & 373 & 94.19\%\\
	\hspace{1mm} (M2) Via classical representation (RE / CFG) & 1913 & 93.77\% & 704 & 99.02\% & 572 & 97.78\% & 309 & 88.79\% & 328 & 82.83\%\\
	\hspace{1mm} (M3) Via \Nile & 1950 & 95.59\% & 699 & 98.31\% & 574 & 98.12\% & 347 & 99.71\% & 330 & 83.33\%$^*$\hspace*{-1.5mm}\\
	\midrule
	\multicolumn{11}{l}{\textbf{RQ 2}: How well can LLMs construct a formal representation that semantically matches a natural-language description?} \\
	\midrule
	\hspace{1mm} (M2) Classical representation (RE / CFG) & 1632 & 80.00\% & 681 & 95.78\% & 484 & 82.74\% & 253 & 72.70\% & 214 & 54.04\%\\
	\hspace{1mm} (M3) \Nile expression & 1881 & 92.21\% & 693 & 97.47\% & 566 & 96.75\% & 341 & 97.99\% & 281 & 70.96\%$^{**}$\hspace*{-3mm}\\
	\midrule
	\multicolumn{11}{l}{\textbf{RQ 3}: How well can LLMs construct a formal representation that syntactically matches a natural-language description?} \\
	\midrule
	\hspace{1mm} (M2) Classical representation (RE / CFG) & 493 & 24.17\% & 385 & 54.15\% & 104 & 17.78\% & 3 & 0.86\% & 1 & 0.25\%\\
	\hspace{1mm} (M3) \Nile expression & 1742 & 85.39\% & 679 & 95.50\% & 474 & 81.03\%\!$^{\triangle}$\hspace*{-2.3mm} & 334 & 95.98\% & 255 & 64.39\%$^{***}$\hspace*{-4.5mm}\\
	\bottomrule
\end{tabular}
	}
\caption{Summary of the evaluation of RQs 1 to 3. Overall performance, performance for regular languages R1 -- R6 and performance for context-free languages C1 -- C5 is shown. For RQ 1, the LLM was asked to judge correctness directly or to construct a formal representation (RE, CFG, \Nile) which was then checked for correctness algorithmically. For RQs 2 and 3, the LLM was asked to construct a semantically and syntactically matching formal representation. Only design data (languages R1 -- R3, C1, C2) has been used for conceptual and prompt design. Posthoc adjusting the prompt for \Nile with two additional examples for non-covered idioms increases the performance to 87.86\%\!$^\triangle$, 93.69\%$^*$, 85.86\%$^{**}$, and 80.30\%$^{***}$. A more detailed evaluation can be found \insertIfWithAppendix{in \cref{section:appendix-detailed-evaluation-data}} in the appendix.}
\label{table:summary}
\end{table*}

\paragraph*{LLM labels} For each data point $(\text{id}, \calR, \calD, \calS)$ with  formal representation $\calR$, natural-language description $\calD$ by an instructor, and description $\calS$ by a student, an LLM was prompted for the following information:
\begin{itemize}
	\item \textbf{Correctness (M1):} Do the natural-language descriptions $\calD$ and $\calS$ describe the same formal language?
	\item \textbf{Classical representation (M2):} Given the natural-language description $\calS$, what is an equivalent regular expression or context-free grammar (depending on $\calR$) that syntactically matches the description $\calS$?
	\item \textbf{\Nile representation (M3):} Given the natural-language description $\calS$, what is an equivalent \Nile expression that syntactically matches the description $\calS$?

\end{itemize}
For invalid or inexpressible natural-language descriptions, the model was asked to return an error label. The alphabet of the language $\calL(\calR)$ was included in the prompt for all (M1) --  (M3), as it is often necessary to interpret natural-language descriptions. For (M2), regular expressions and context-free grammars were chosen because they are descriptive representations. Each prompt included only few examples ($\leq 10$), as these representations are in the training data of models. For (M3), the prompt described the language \Nile via 62 examples covering all \Nile~constructs and important idioms. No formal specification was given. %

As LLM we used the gpt-oss-120b model \cite{Openai2025} with temperature 0. It was chosen after performing initial experiments with other locally deployable models (Mistral-Small-3.2-24B-Instruct-2506, openai/gpt-oss-20b). %

\paragraph*{Annotation}

 For each data point $(\text{id}, \calR, \calD, \calS)$ with answers (M1) -- (M3) by the model, we annotated the model's answers. To this end, let $\calI(\calS)$ be the set of formal languages that are plausible interpretations for the student description~$\calS$. As preliminary step, all data points with $\calI(\calS) = \emptyset$ and non-natural-language descriptions such as (operational descriptions of) formal representations have been removed (240 of 2,280).  The natural-language descriptions $\calS$ by students for the remaining 2,040~data points are annotated for correctness, i.e. $\mathcal{L}(\mathcal{R}) \in \mathcal{I}(\mathcal{S})$.

The LLM answers (M2)  and (M3) are parsed as regular expressions, context-free grammars, or \Nile expressions, respectively. If successful, the semantic correctness (i.e. is $\mathcal{L}(\mathcal{R}_\mathcal{M}) \in \mathcal{I}(\mathcal{S})$?) and syntactic correctness (i.e. if semantically correct, does $\calS$ syntactically match~$\mathcal{R}_\mathcal{M}$?) of the resulting representation $\calR_\calM$ wrt. $\calS$ are annotated.

\subsection{Results}\label{section:results}
The evaluation results are summarized in Table \ref{table:summary}.

For answering RQ 1, we evaluate how often the LLM decides correctly whether a natural-language description is accurate. The accuracy is $\approx$\,95\% when only asking for a decision (M1), and similar for (M2) and (M3), where the LLM is asked to construct a formal representation (RE, CFG, \Nile) for which equivalence is checked algorithmically.

For answering RQ 2, we evaluate how often the LLM produces a correct formal representation in $\calI(\calS)$ when asked for a syntactically matching, equivalent representation for a  description $\calS$ (M2 and M3). Over the whole data set, natural-language descriptions are translated with accuracy 92.21\% into \Nile and with accuracy 80.00\% into REs/CFGs. Performance for CFGs is significantly lower.

For answering RQ 3, we evaluate how often the LLM produces a correct formal representation of a language in $\calI(\calS)$ that also syntactically matches the natural-language description $\calS$. Translations into \Nile expressions, with accuracy $>$85\%, outperform translations into REs/CFGs by far. The very low accuracy for CFGs is not surprising, as CFGs represent many simple natural-language constructs, such as counting, by recursion. Our data shows that \Nile addresses this shortcoming and achieves its main design goal.  %

\paragraph{Limitations} The accuracy of the LLM for all representations and all RQs strongly depends on the concrete formal languages (see \insertIfWithAppendix{\cref{section:appendix-detailed-evaluation-data} in the} appendix). This is likely a contributing factor to differing accuracy across design and validation data set. Inspection of the validation data set after the experiments also revealed that the prompt of the LLM did not cover all \Nile constructs; adding two examples to the prompt significantly increases the accuracy of \Nile for all research questions (see \cref{table:summary}).

\section{Summary and Discussion}\label{section:summary}
We introduced \Nile %
and showed that natural-language descriptions of formal languages can be translated into syntactically close \Nile expressions with LLMs.%

\Nile can then be used for providing feedback on natural-language descriptions of formal languages by students. For example, recall the \Nile expression $\mathtt{HAS}(=3,a)$ for ``Strings with three $a$s'' from Example \ref{example:nile:has} wrongly expressing the language ``Strings with three consecutive $a$s''. The closeness of $\mathtt{HAS}(=3,a)$ to an accurate \Nile expression $\mathtt{HAS}(\red{\mathtt{REP}(\black{=3, a})})$ can be exploited for explanations, for instance by providing a derived natural-language description focussing on the mistake, e.g., \textit{Do you really mean ``Strings that contain three $a$\textup{'s} at arbitrary positions''?}.

While \Nile captures all regular and bounded context-free languages, it does not capture context-free languages such as the palindrome language $\{w \mid w = w^R\}$. However, the modularity of \Nile allows to easily add further language constructs, for instance, word quantification such as $\exists w [a \circ w \circ w]$ for the language $\{aww \mid w \in \Sigma^*\}$ or word reversal $\mathtt{REVERSE}(\varphi)$ for all words $w$  such that $w^R \models \varphi$. Our implementation already includes these constructs.

The next step is to design didactic explanations and incorporate them into an educational support system. We also plan to use the approach for more formalisms, e.g.~natural-language descriptions of the meaning of logical formulas.
 
\newpage

\bibliographystyle{kr}
\bibliography{bibliography}

\ifwithappendix
\newpage
\appendix
\section{Appendix}
\newcommand{\moselat}[0]{\mathbin{@}}
\newcommand{\moselperc}[0]{\mathbin{\%}}
\newcommand{\lqm}{\mathord{\mbox{`}}}
\newcommand{\rqm}{\mathord{\mbox{'}}}

\subsection{Additional Material for Section \ref{section:nile}}
\subsubsection{Additional Syntactic Sugar}

\begin{itemize}
	\item Logical
		\begin{itemize}
		 \item $\forall i \varphi \eqvdef \neg \exists i \neg \varphi$
		\end{itemize}

	\item Elementary
		\begin{itemize}
			\item $\mathtt{REP}(\varphi) \eqvdef \mathtt{REP}(\geq 0, \varphi)$
			\item $\mathtt{HAS}(\varphi) \eqvdef \mathtt{REP}(\geq 1, \varphi)$
			\item $\mathtt{LEN}=P \eqvdef \mathtt{REP}(P, \bigvee_{\sigma \in \Sigma} \sigma)$
			\item $\mathtt{ALPH}(\sigma_1, \ldots, \sigma_k){:}\ \varphi \eqvdef \mathtt{REP}(\sigma_1 \lor \ldots \lor \sigma_k) \land \varphi$
			\item $\mathtt{ALTERNATE}(\varphi, \chi) \eqvdef (\varphi \lor \varepsilon) \circ \mathtt{REP}(\chi \circ \varphi) \circ (\chi \circ \varepsilon)$
			\item $\mathtt{CONS}(P_1, \varphi_1;\ldots; P_k, \varphi_k)$\\ $\eqvdef \mathtt{REP}(\varphi_1 \lor \ldots \lor \varphi_k)\land \mathtt{HAS}(P_1, \varphi_1) \land ... \land \mathtt{HAS}(P_k, \varphi_k)$

		\end{itemize}
	\item Prefixes, suffixes, infixes
		\begin{itemize}
			\item $\mathtt{BEGIN}(P,\varphi) \eqvdef \mathtt{REP}(P,\varphi) \circ ( \varepsilon \lor \lnot(\varphi \circ \top))$
			\item $\mathtt{END}(P,\varphi) \eqvdef (\varepsilon \lor \lnot(\top \circ \varphi)) \circ \mathtt{REP}(P,\varphi)$
			\item $\mathtt{RANGE}(i, j, \varphi)$\\\hspace*{3mm}$\eqvdef \mathtt{LEN}=i-1 \circ (\varphi \land \mathtt{LEN}=i-j+1) \circ \top$
			\item $\mathtt{AT}(i, \varphi) \eqvdef \mathtt{RANGE}(i, i, \varphi)$
		\end{itemize}
	\item Counting
		\begin{itemize}
			\item $c \cdot \mathtt{COUNT}(\varphi) > d \cdot \mathtt{COUNT}(\psi)$\\ \hspace*{3mm}$\eqvdef \exists i \big(\mathtt{HAS}(>i, \varphi)\wedge \mathtt{HAS}(\leq \frac{d}{c}i, \psi)\big)$
			\item $\mathtt{COUNT}(\varphi) > \mathtt{COUNT}(\psi)$\\ \hspace*{3mm}$\eqvdef 1 \cdot \mathtt{COUNT}(\varphi) > 1 \cdot \mathtt{COUNT}(\psi)$
		\end{itemize}
	\item Unary numerical predicates
		\begin{itemize}
		 \item $\mathtt{EVEN} \eqvdef (\equiv 0 \mod 2)$
		 \item $\mathtt{ODD} \eqvdef (\equiv 1 \mod 2)$
		\end{itemize}
\end{itemize}

\subsubsection{Additional Examples}
\begin{example}
 When asked for a natural language description for the regular language given by set notation $\{a^m b^n \mid \text{$m$ even, $n$ odd}\}$ a student may provide the description  ``All strings that start with an even number of $a$'s and end with an odd number of $b$'s'' expressible in \textsc{Nile} as
			\[\red{\mathtt{BEGIN}}(\mathtt{EVEN},a)\mathbin{\red{\land}}\red{\mathtt{END}}(\mathtt{ODD},b)\]
		Comparison with a correct description ``Strings that start with an even number of $a$'s, followed only by an odd number of $b$'s''  expressible in $\textsc{Nile}$ as
		\[\red{\mathtt{REP}}(\mathtt{EVEN},a)\mathbin{\red{\circ}}\red{\mathtt{REP}}(\mathtt{ODD},b)\]
		yields, again, that consecutiveness has not been taken into account, derivable from the syntax trees of the \textsc{Nile} expressions.

\end{example}
\begin{example}
When asked for a natural language description for the context-free language given by the context-free grammar $ S \rightarrow bS \mid SS \mid aSb \mid bSa \mid \varepsilon $, a student may provide the description  ``Strings with more $b$s than $a$s'' expressible in \textsc{Nile} as
			\[\mathtt{COUNT}(b)\mathrel{\red{>}}\mathtt{COUNT}(a)\]
		Comparison with a correct description ``Strings with at least as many $b$s as $a$s'' expressible in $\textsc{Nile}$ as
		\[\mathtt{COUNT}(b)\mathrel{\red{\ge}}\mathtt{COUNT}(a)\]
		yields an explanation based on the counts of $a$s and $b$s, derivable from the \textsc{Nile} expressions.

\end{example}

\subsubsection{Proofs for Expressive Power of Nile}
\theoremNileRegular

\begin{proofsketch}
	All regular languages can be expressed in regular \Nile, because the empty language ($\bot$), words, union ($\varphi \lor \chi$), concatenation ($\varphi \circ \chi$) and the Kleene star ($\mathtt{REP}(\ge 0, \varphi)$) can be expressed.

	For the other direction, we observe that terms in regular \Nile are constants as they cannot contain variables. %
	Therefore all regular \Nile constructs can be easily translated into monadic second order logic (MSO) formulas. The result follows from the equivalence of MSO and regular languages, see e.g.\ \cite{Libkin04}.
\end{proofsketch}
We observe that because regular \Nile captures regular expressions with negations, the translation of regular \Nile expressions into deterministic finite state is $\mathsf{TOWER}$-complete \cite{StockmeyerM73,Stockmeyer74}.

\theoremNileBounded
\begin{proof}
	A bounded language $\calL \subseteq \calL(w_1^* \ldots w_m^*)$ can be written in the form $\{w_1^{x_1} \ldots w_m^{x_m}\mid (x_1,\ldots, x_m)\in S\}$ where $S$ is a semi-linear set over $\N_0^m$ \cite{Ginsburg1966a}. Thus, there are vectors $\vec a_{i,j}\in\N_0^m$ and numbers $\ell,r_i\in\N_0$ such that
	\begin{multline*}
		S= \bigcup_{i=1}^\ell\Big\{\vec a_{i,0}+\sum_{j=1}^{r_i} t_{i,j} \vec a_{i,j}\mid t_1,\ldots,t_{r_i}\in\N_0\Big\}.
	\end{multline*}
	This condition can easily be translated into a \Nile expression, so we have
	\begin{multline*}
		w=w_1^{x_1} \ldots w_m^{x_m}\in \calL \iff 
		\bar w \models\bigvee_{i=1}^\ell\Big(\exists t_1\ldots\exists t_{r_i} \\ 
		\bigcirc_{k=1}^m \mathtt{REP}\big(=a_{i,0,k}+\sum_{j=1}^{r_i} t_{i,j} a_{i,j,k},\;w_k\big)\Big)
	\end{multline*}
	where $a_{i,j,k}$ denotes the $k$th entry of the vector $\vec a_{i,j}$. The unary numerical predicate $=z$ can easily be expressed in core-\Nile by $(\ge z)\land\lnot(\ge z+1)$.
\end{proof}

\subsection{Extension by more context-free constructs}
While in our experiments only counting-based context-free constructs are used, our implementation also slightly extends core \Nile by further context-free constructs. Among others, it allows
\begin{itemize}
 \item String quantification: $\exists u \varphi$ existentially quantifies a string $u$. The semantics is defined by evaluting expressions over tuples  $(w, \beta, \gamma)$ consisting of a string $w$, an interpretation $\beta$ of free integer variables, and an interpretation $\gamma$ of free string variables. The string variables can then be used in other expressions.
 \item A reverse operator: $w \models \mathtt{REVERSE}(\varphi) \iffdef w^R \models \varphi$
\end{itemize}

This allows syntactic sugar such as:
\begin{itemize}
 \item Palindrome: \\
 $\mathtt{PALINDROME} \eqvdef \exists w [w \circ (\mathtt{LEN} \le 1) \circ \mathtt{REVERSE}(w)]$
\end{itemize}
\subsection{Additional Material for Section \ref{section:evaluation}: Details for exercises}
 We provide more details for the formal languages in our data set (see Table \ref{table:exercises}). As the exercises for our data collection were in German, we provide German and English descriptions.

 \subsubsection{Regular languages}

\begin{description}
 \item[R1:] $\{ab^n \mid n \ge 1\}$
			\begin{itemize}
			\item \textbf{German}: Die Sprache enthält genau die Wörter über dem Alphabet $\{a, b\}$, die mit $ab$ beginnen und danach nur noch beliebig viele $b$s enthalten.
			\item \textbf{English}: The language contains exactly those words over the alphabet $\{a, b\}$ that begin with $ab$ and afterwards only contain an arbitrary amount of $b$s.
			\item \textbf{Extended \Nile}: $\mathtt{ALPH}(a,b){:}\ ab \circ \mathtt{REP}(b)$%
			\item \textbf{RE}: $abb^*$
			\item \textbf{\Mosel}: $(\left|\mathtt{all}\right| \ge 2) \land (a \moselat \mathtt{fst}) \land (b \moselat \mathtt{psGt}~\mathtt{fst})$
		\end{itemize}
	\item[R2:] $\{\varepsilon\} \cup \{wa \mid w \in \{a, b\}^*\}$
		\begin{itemize}
			\item \textbf{German}: Die Sprache enthält genau die Wörter über dem Alphabet $\{a, b\}$, die nicht auf $b$ enden.
			\item \textbf{English}: The language contains exactly those words over the alphabet $\{a, b\}$ that do not end with $b$.
			\item \textbf{Extended \Nile}: $\mathtt{ALPH}(a,b){:}\ \lnot \mathtt{END}(b)$
			\item \textbf{RE}: $\varepsilon+(a+b)^*a$
			\item \textbf{\Mosel}: $\lnot \mathtt{endWt}~\lqm b \rqm$
		\end{itemize}
	\item[R3:] $\{w \in \{a, b\}^* \mid w \text{ contains } aaa \}$
		\begin{itemize}
			\item \textbf{German}: Die Sprache enthält genau die Wörter über dem Alphabet $\{a, b\}$, die $aaa$ enthalten.
			\item \textbf{English}: The language contains exactly those words over the alphabet $\{a, b\}$ that contain $aaa$.
			\item \textbf{Extended \Nile}: $\mathtt{ALPH}(a,b){:}\ \mathtt{HAS}(aaa)$
			\item \textbf{RE}: $(a+b)^*aaa(a+b)^*$
			\item \textbf{\Mosel}: $\left|\mathtt{indOf}~\lqm aaa \rqm\right| \ge 1$
		\end{itemize}
	\item[R4:] $\{a^mb^n \mid m \text{ even}, n \text{ odd} \}$
				\begin{itemize}
			\item \textbf{German}: Die Sprache enthält genau die Wörter über dem Alphabet $\{a, b\}$, die aus einem Block $a$s gefolgt von einem Block $b$s bestehen, wobei die Anzahl der $a$s gerade und die der $b$s ungerade ist.
			\item \textbf{English}: The language contains exactly those words over the alphabet $\{a, b\}$ that consist of an even-length sequence of $a$s followed by an odd-length sequence of $b$s.
			\item \textbf{Extended \Nile}: $\mathtt{ALPH}(a,b){:}\ \mathtt{REP}(\mathtt{EVEN},a) \circ \mathtt{REP}(\mathtt{ODD},b)$
			\item \textbf{RE}: $(aa)^*b(bb)^*$
			\item \textbf{\Mosel}: $\exists i.~(a \moselat \mathtt{psLt}~i) \land (b \moselat \mathtt{psGe}~i) \land (\left|\mathtt{psLt}~i\right|\moselperc 2 = 0) \land (\left|\mathtt{psGe}~i\right|\moselperc 2 = 1)$
		\end{itemize}
	\item[R5:]	$\{w \in \{a, b, c\}^* \mid \text{every } a \text{ is directly followed by } b\}$
		\begin{itemize}
			\item \textbf{German}: Die Sprache enthält genau die Wörter über dem Alphabet $\{a, b, c\}$, in denen auf jedes $a$ immer direkt ein $b$ folgt.
			\item \textbf{English}: The language contains exactly those words over the alphabet $\{a, b, c\}$ in which every occurrence of $a$ is directly followed by a $b$.
			\item \textbf{Extended \Nile}: $\mathtt{ALPH}(a,b,c){:}\ \forall i[\mathtt{AT}(=i,a) \to \mathtt{AT}(=i+1,b)]$
			\item \textbf{RE}: $(b+c+ab)^*$
			\item \textbf{\Mosel}: $\forall i.~(a \moselat i \implies b \moselat i + 1)$
		\end{itemize}
	\item[R6:] $\{w \in \{a, b, c\}^* \mid \#_b(w) = 0 \lor w \text{ ends in } aba\}$
		\begin{itemize}
			\item \textbf{German}: Die Sprache enthält genau die Wörter über dem Alphabet $\{a, b, c\}$, die entweder keine $b$s enthalten oder auf $aba$ enden.
			\item \textbf{English}: The language contains exactly those words over the alphabet $\{a, b, c\}$ that either contain no $b$s or end with $aba$.
			\item \textbf{Extended \Nile}: $\mathtt{ALPH}(a,b,c){:}\ \lnot\mathtt{HAS}(b) \lor \mathtt{END}(aba)$
			\item \textbf{RE}: $(a+c)^*+(a+b+c)^*aba$
			\item \textbf{\Mosel}: $(\left|\mathtt{indOf}~b\right| = 0) \lor (\mathtt{endWt}~\lqm aba \rqm)$
		\end{itemize}

\end{description}

 \subsubsection{Context-free languages}
\begin{description}
	\item[C1:] $\{w \in \{a, b\}^* \mid \#_b(w) \ge \#_a(w)\}$
		\begin{itemize}
			\item \textbf{German}: Die Sprache enthält genau die Wörter über dem Alphabet $\{a, b\}$ mit mindestens so vielen $b$s wie $a$s.
			\item \textbf{English}: The language contains exactly those words over the alphabet $\{a, b\}$ with at least as many $b$s as $a$s.
			\item \textbf{Extended \Nile}: $\mathtt{COUNT}(b) \ge \mathtt{COUNT}(a)$
		\end{itemize}
	\item[C2:] $\{a^nb^n \mid n \in \mathbb{N}_0 \}$
		\begin{itemize}
			\item \textbf{German}: Die Sprache enthält genau die Wörter über dem Alphabet $\{a, b\}$, die mit beliebig vielen $ a $s beginnen und danach nur noch genau so viele $b$s enthalten.
			\item \textbf{English}: The language contains exactly those words over the alphabet $\{a, b\}$ that consist of a sequence of $a$s followed by a sequence of $b$s of the same length.
			\item \textbf{Extended \Nile}: $\mathtt{ALPH}(a,b){:}\ \exists i[\mathtt{REP}(=i,a)\circ\mathtt{REP}(=i, b)]$
		\end{itemize}
	\item[C3:] $\{a^mb^n(ab)^{m+n} \mid m, n \in \mathbb{N}_0 \}$
		\begin{itemize}
			\item \textbf{German}: Die Sprache enthält genau die Wörter über dem Alphabet $\{a, b\}$, die aus einem Block $a$s, gefolgt von einem Block $b$s, gefolgt von Wiederholungen des Teilwortes $ab$ bestehen, wobei $ab$ genau so oft wiederholt wird wie die Anzahl der $a$s und $b$s zusammen.
			\item \textbf{English}: The language contains exactly those words over the alphabet $\{a, b\}$ that consist of a sequence of $a$s followed by a sequence of $b$s and finally a sequence of $ab$s, where $ab$ is repeated as often as the sum of previous repetitions of $a$s and $b$s.
			\item \textbf{Extended \Nile}: $\mathtt{ALPH}(a,b){:}\ \exists i \exists j [\mathtt{REP}(= i, a) \circ \mathtt{REP}(= j, b) \circ \mathtt{REP}(= i + j, ab)]$
		\end{itemize}
	\item[C4:] $\{wba^n \mid w \in \{a, b\}^*, n \in \mathbb{N}_0, \#_a(w) = n\}$
		\begin{itemize}
			\item \textbf{German}: Die Sprache enthält genau die Wörter über dem Alphabet $\{a, b\}$, die mindestens ein $b$ haben und nach dem letzten $b$ genau so viele $a$s haben wie davor.
			\item \textbf{English}: The language contains exactly those words over the alphabet $\{a, b\}$ that contain at least one $b$ and have as many $a$s following the last $b$ as are preceding it.
			\item \textbf{Extended \Nile}:  $\mathtt{ALPH}(a,b){:}\ \exists i[\mathtt{HAS}(= i, a) \circ b \circ \mathtt{REP}(= i, a)]$
		\end{itemize}
	\item[C5:] $\{w \in \{a, b, c\}^* \mid \#_a(w) \le 2 \cdot \#_b(w)\}$
		\begin{itemize}
			\item \textbf{German}: Die Sprache enthält genau die Wörter über dem Alphabet $\{a, b, c\}$, die höchstens doppelt so viele $a$s wie $b$s enthalten.
			\item \textbf{English}: The language contains exactly those words over the alphabet $\{a, b, c\}$ that contain at most twice as many $a$s as $b$s.
			\item \textbf{Extended \Nile}: $\mathtt{ALPH}(a,b,c):\mathtt{COUNT}(a) \le 2 \cdot \mathtt{COUNT}(b)$
		\end{itemize}
\end{description}

\subsection{Additional Material for Section \ref{section:evaluation}: \Nile Prompt details}
We give some more details for the prompts used for \Nile. As \Nile is a new language, the LLM did not encounter \Nile during training. Within the prompt, \Nile is introduced via 62 examples containing a natural language description and a corresponding \Nile construct. These examples are designed to (i) cover all core language constructs as well all syntactic sugar, and (ii) show how statements can be composed. As our student data is in German, the example descriptions are also in German. The precise prompts can be found in Appendix \ref{section:prompts}.

We note that some examples are very close to expressions that describe the formal languages R1--R6, C1--C5. This is hard to avoid if all standard constructs are illustrated; also the prompts are so extensive to support standard formal languages from educational contexts without modification. The natural language descriptions by students in our data set vary a lot: each formal language R1--R6, C1--C5 comes with between 16 and 50 natural language descriptions by students that lead to non-equivalent \Nile expressions (see Table \ref{table:exercises}).

The original \Nile prompt designed with languages R1--R3, C1, C2 performs badly for languages R5 and C3. In R5, the model reliably returned equivalent \Nile~expressions for ``every $a$ is directly followed by $b$'', but often failed to preserve the construction idea and instead defaults to regular expression-like descriptions (e.g. $\mathtt{REP}(b \lor c \lor ab)$), which is not matching the natural-language description syntactically. In C3, many syntax errors arise due to the model attempting to place two variables under a single quantifier (which our language definition does not allow, but the LLM could not know that).

Our original prompt had been designed independently of exercises (R4)-(R6) and (C3)-(C5), and thus did not include instances of relative indexing or multiple quantifiers, which would address both issues. In a realistic setting, instructors would have the ability to custom-tailor prompts for each exercise. We therefore opted to repeat both experiments with an adjusted prompt, extended by a single example each:

\begin{itemize}
	\item Example: Relative indexing
	\begin{itemize}
		\item \textbf{Natural language (German)}: Die Sprache der Wörter, in denen jedes $a$ direkt einem $b$ folgt.
		\item \textbf{Natural language (English)}: The language of strings in which every $a$ directly follows a $b$.
		\item \textbf{\Nile}: $\forall i[\mathtt{AT}(i,a) \to \mathtt{AT}(i-1,b)]$
	\end{itemize}
	\item Example: Multiple quantifiers
	\begin{itemize}
		\item \textbf{Natural language (German)}: Die Sprache der Wörter, die aus einer Folge von $a$s, dann einer Folge von $b$s und zuletzt wieder einer Folge von $a$s bestehen, wobei jede der Folgen länger ist als die vorhergehende.
		\item \textbf{Natural language (English)}: The language of strings that consist of a sequence of $a$s, followed by a sequence of $b$s, and finally another sequence of $a$s, where each sequence is longer than the previous one.
		\item \textbf{\Nile}: $\exists i \exists j[\mathtt{REP}(=i,a)\circ\mathtt{REP}((=j) \land (>i))\circ\mathtt{REP}(>j,a)]$
	\end{itemize}
\end{itemize}

The evaluation data for the original prompts and the adjusted prompts can be found in Tables \ref{table:appendix-overall+design-data} through \ref{table:appendix-accumulated-data}.

\subsection{Additional Material for Section \ref{section:evaluation}: Details for Annotation  and Evaluation}

Challenges in annotating data are (i) ambiguities in natural language descriptions of students, (ii) determinining syntactical closeness of formal representations to natural language descriptions.  We discuss how we adress these challenges next.

\subsubsection{Ambiguities}\label{section:ambiguities}

Recall that $\mathcal{I}(\mathcal{\calS})$ denotes the set of languages potentially described by a natural description $\mathcal{\calS}$. The description $\calS$ is \emph{ambiguous} if $\left|\mathcal{I}(\mathcal{S})\right| > 1$.

A natural language description $\calS$ by a student can fall in one of the following categories,  where $\mathcal{L} = \mathcal{L}(\mathcal{R})$ is the single solution language and $\tilde{\mathcal{L}}_1, \tilde{\mathcal{L}}_2, ...$ are arbitrary non-solution languages:

\begin{itemize}
	\item[(1)] $\calS$ is unambiguously correct, i.e. $\mathcal{I}(\mathcal{\calS}) = \{\mathcal{L}\}$,
	\item[(2)] $\calS$ is ambiguous to a correct solution and at least one non-solution language, i.e. $\mathcal{I}(\mathcal{\calS}) = \{\mathcal{L}, \tilde{\mathcal{L}}_1, \tilde{\mathcal{L}}_2, ...\}$,
	\item[(3)] $\calS$ is unambiguous to one non-solution language, i.e. $\mathcal{I}(\mathcal{\calS}) = \{\tilde{\mathcal{L}}_1\}$,
	\item[(4)] $\calS$ is ambiguous to multiple non-solution languages, i.e. $\mathcal{I}(\mathcal{\calS}) = \{\tilde{\mathcal{L}}_1, \tilde{\mathcal{L}}_2, ...\}$, and
	\item[(5)] $\calS$ is invalid, i.e. $\mathcal{I}(\mathcal{\calS}) = \emptyset$. More precisely, invalid submissions constitute nonsensical, contradictory or overly vague language descriptions, language descriptions made up mostly or exclusively of formal elements, operational descriptions of the representation (e.g. making reference to the states of an automaton), and text strings that are not language descriptions.
\end{itemize}

We annotated our data set with these categories; see Table \ref{table:ambiguities} for a summary.

In our evaluation, we removed all descriptions of category (5). For evaluating RQ 1, the method was expected to return ``true'' for all submissions in category (1) and return ``false'' for all submissions in categories (3) and (4). For submissions in category (2), either answer was deemed acceptable. Thus, the worst achievable performance for RQ 1 is not 0\%, but the fraction of ``true ambiguous'' submissions.

\begin{table}[t]
	\centering
  \def\arraystretch{1.1}
	\small
\begin{tabular}{rrrrrr}
Task & $\{\mathcal{L}\}$ & $\{\mathcal{L}, \tilde{\mathcal{L}}_1, (...)\}$ & $\{\tilde{\mathcal{L}}_1\}$ & $\{\tilde{\mathcal{L}}_1, \tilde{\mathcal{L}}_2, (...)\}$ & $\emptyset$ \\
\hline
R1 & 73 & 50 & 78 & 25 & 50 \\
R2 & 157 & 32 & 51 & 15 & 18 \\
R3 & 121 & 22 & 66 & 21 & 25 \\
C1 & 52 & 2 & 114 & 17 & 24 \\
C2 & 50 & 19 & 89 & 5 & 38 \\
R4 & 91 & 32 & 77 & 1 & 4 \\
R5 & 111 & 29 & 41 & 11 & 6 \\
R6 & 156 & 10 & 24 & 2 & 8 \\
C3 & 86 & 19 & 33 & 15 & 9 \\
C4 & 22 & 4 & 82 & 18 & 33 \\
C5 & 37 & 3 & 69 & 8 & 25 \\
\end{tabular}
\caption{Categorization of submissions per task by correctness, ambiguity and validity.} \label{table:ambiguities}
\end{table}

\subsubsection{Annotating Syntactic Matching}

We more precisely define our methodology for evaluating RQ 3 by clarifying our notion of ``syntactic matching''. These guidelines were formulated with the objective of being applicable universally across our representation formats (RE, CFG, \Nile), and as such rely on natural-language semantics rather than syntactical specifics of those formats. In order for a formal representation $\mathcal{R}_\mathcal{M}$ to syntactically match a natural description $\mathcal{D}$, any phrase in $\mathcal{D}$ must fall into one of three categories:

\begin{enumerate}
	\item Phrases that directly correspond to a construct in the output representation $\mathcal{R}_\mathcal{M}$ are permitted.
	\item Phrases that are expressed by a small, self-contained idiom in $\mathcal{R}_\mathcal{M}$ are permitted. For instance, the natural-language phrase ``an arbitrary sequence of symbols'' maps to the idiom $(a+b)^*$ in regular expressions when $\Sigma = \{a, b\}$.
	\item Phrases that do not correspond to any part of $\mathcal{R}_\mathcal{M}$ are permitted as long as they are obviously redundant and do not impede understanding.
\end{enumerate}

While models are permitted to make simplifications when constructing $\mathcal{R}_\mathcal{M}$ as per the third criterion, it is essential that any construct in $\mathcal{R}_\mathcal{M}$ is accounted for by a matching natural-language construct in $\mathcal{D}$. For example, the natural description $\mathcal{D} =$ ``words ending in $a$'' is not matched by the regular expression $\mathcal{R}_\mathcal{M} = (a+b)^*a$, as the subexpression $(a+b)^*$ and its concatenation to $a$ do not occur in $\mathcal{D}$ and are merely artifacts of the limited expressivity of regular expressions.

For context-free grammars, recursive productions were never considered to be self-contained idioms due to their non-local nature. This accounts for the poor performance of context-free grammars in RQ 3, as virtually all of the submitted languages required recursion to be expressed.

\clearpage

\onecolumn
\section{Detailed Evaluation Data}\label{section:appendix-detailed-evaluation-data}
\vspace{1em}
\begin{table}[!h]
	\def\arraystretch{1.2}
    \centering
    \scalebox{0.64}{%
    \begin{tabular}{l||>{\bfseries}r>{\bfseries}r|>{\bfseries}r>{\bfseries}rrrrrrrrrrr}
	\toprule
        & \multicolumn{2}{c|}{}             & \multicolumn{12}{c}{Design data} \\
        & \multicolumn{2}{c|}{\textbf{all}} & \multicolumn{2}{c}{\textbf{all}}  & \multicolumn{2}{c}{R1} & \multicolumn{2}{c}{R2} & \multicolumn{2}{c}{R3} & \multicolumn{2}{c}{C1} & \multicolumn{2}{c}{C2} \\
        & $\#$                              & $\%$                              & $\#$                   & $\%$                   & $\#$                   & $\%$                   & $\#$                    & $\%$ & $\#$ & $\%$ & $\#$ & $\%$ & $\#$ & $\%$ \\
    \midrule
	\hspace{3mm} all descriptions                                                                                                                       & 2040 & 100\%   & 1059 & 100\%   & 226 & 100\%   & 255 & 100\%    & 230 & 100\%    & 185 & 100\%    & 163 & 100\%    \\
	\hspace{9mm} describing the solution language: $\{\mathcal{L}\}$                                                                                    & 956  & & 453  &  & 73  &  & 157 &  & 121 &  & 52  &    & 50  &    \\
	\hspace{9mm} describing non-solution language(s): $\{\tilde{\mathcal{L}}_1, (...)\}$                                                                & 862  & & 481  &  & 103 &  & 66  &  & 87  &  & 131 &    & 94  &    \\
	\hspace{9mm} describing non- and solution language(s): $\{\mathcal{L}, \tilde{\mathcal{L}}_1, (...)\}$                                              & 222  & & 125  &  & 50  &  & 32  &  & 22  &  & 2   &    & 19  &    \\
	\midrule
	\multicolumn{15}{l}{\textbf{RQ 1}: How well can LLMs decide whether a natural language description fits to a formal language?}                     \\
	\midrule
	\hspace{3mm} (M1) Directly                                                                                                                          & 1956 & 95.88\% & 1019 & 96.22\% & 214 & 94.69\% & 247 & 96.86\%  & 215 & 93.48\%  & 184 & 99.46\%  & 159 & 97.55\%  \\
	\hspace{9mm} describing the solution language: $\{\mathcal{L}\}$                                                                                    & 917  & 95.92\% & 434  & 95.81\% & 67  & 91.78\% & 153 & 97.45\%  & 113 & 93.39\%  & 51  & 98.08\%  & 50  & 100.00\% \\
	\hspace{9mm} describing non-solution language(s): $\{\tilde{\mathcal{L}}_1, (...)\}$                                                                & 817  & 94.78\% & 460  & 95.63\% & 97  & 94.17\% & 62  & 93.94\%  & 80  & 91.95\%  & 131 & 100.00\% & 90  & 95.74\%  \\
	\hspace{3mm} (M2) Via classical representation (RE / CFG)                                                                                           & 1913 & 93.77\% & 1013 & 95.66\% & 224 & 99.12\% & 252 & 98.82\%  & 228 & 99.13\%  & 147 & 79.46\%  & 162 & 99.39\%  \\
	\hspace{9mm} describing the solution language: $\{\mathcal{L}\}$                                                                                    & 837  & 87.55\% & 413  & 91.17\% & 72  & 98.63\% & 156 & 99.36\%  & 119 & 98.35\%  & 16  & 30.77\%  & 50  & 100.00\% \\
	\hspace{9mm} describing non-solution language(s): $\{\tilde{\mathcal{L}}_1, (...)\}$                                                                & 854  & 99.07\% & 475  & 98.75\% & 102 & 99.03\% & 64  & 96.97\%  & 87  & 100.00\% & 129 & 98.47\%  & 93  & 98.94\%  \\
	\hspace{3mm} (M3) Via \Nile                                                                                                                         & 1950 & 95.59\% & 1046 & 98.77\% & 223 & 98.67\% & 248 & 97.25\%  & 228 & 99.13\%  & 184 & 99.46\%  & 163 & 100.00\% \\
	\hspace{9mm} describing the solution language: $\{\mathcal{L}\}$                                                                                    & 873  & 91.32\% & 443  & 97.79\% & 72  & 98.63\% & 151 & 96.18\%  & 119 & 98.35\%  & 51  & 98.08\%  & 50  & 100.00\% \\
	\hspace{9mm} describing non-solution language(s): $\{\tilde{\mathcal{L}}_1, (...)\}$                                                                & 855  & 99.19\% & 478  & 99.38\% & 101 & 98.06\% & 65  & 98.48\%  & 87  & 100.00\% & 131 & 100.00\% & 94  & 100.00\% \\
	\hspace{3mm} (M3$^*$) Via \Nile (adjusted prompts)                                                                                                  & 1991 & 97.60\% & 1046 & 98.77\% & 223 & 98.67\% & 248 & 97.25\%  & 228 & 99.13\%  & 184 & 99.46\%  & 163 & 100.00\% \\
	\hspace{9mm} describing the solution language: $\{\mathcal{L}\}$                                                                                    & 914  & 95.61\% & 443  & 97.79\% & 72  & 98.63\% & 151 & 96.18\%  & 119 & 98.35\%  & 51  & 98.08\%  & 50  & 100.00\% \\
	\hspace{9mm} describing non-solution language(s): $\{\tilde{\mathcal{L}}_1, (...)\}$                                                                & 855  & 99.19\% & 478  & 99.38\% & 101 & 98.06\% & 65  & 98.48\%  & 87  & 100.00\% & 131 & 100.00\% & 94  & 100.00\% \\
	\midrule
	\multicolumn{15}{l}{\textbf{RQ 2}: How well can LLMs construct a formal representation that semantically matches a natural language description?}  \\
	\midrule
	\hspace{3mm} (M2) Classical representation (RE / CFG)                                                                                               & 1632 & 80.00\% & 934  & 88.20\% & 213 & 94.25\% & 249 & 97.65\%  & 219 & 95.22\%  & 98  & 52.97\%  & 155 & 95.09\%  \\
	\hspace{9mm} describing the solution language: $\{\mathcal{L}\}$                                                                                    & 837  & 87.55\% & 413  & 91.17\% & 72  & 98.63\% & 156 & 99.36\%  & 119 & 98.35\%  & 16  & 30.77\%  & 50  & 100.00\% \\
	\hspace{9mm} describing non-solution language(s): $\{\tilde{\mathcal{L}}_1, (...)\}$                                                                & 602  & 69.84\% & 407  & 84.62\% & 97  & 94.17\% & 61  & 92.42\%  & 81  & 93.10\%  & 82  & 62.60\%  & 86  & 91.49\%  \\
	\hspace{9mm}  describing non- and solution language(s): $\{\mathcal{L}, \tilde{\mathcal{L}}_1, (...)\}$                                             & 193  & 86.94\% & 114  & 91.20\% & 44  & 88.00\% & 32  & 100.00\% & 19  & 86.36\%  & 0   & 0.00\%   & 19  & 100.00\% \\
	\hspace{3mm} (M3) \Nile expression                                                                                                                  & 1881 & 92.21\% & 1034 & 97.64\% & 220 & 97.35\% & 247 & 96.86\%  & 226 & 98.26\%  & 182 & 98.38\%  & 159 & 97.55\%  \\
	\hspace{9mm} describing the solution language: $\{\mathcal{L}\}$                                                                                    & 873  & 91.32\% & 443  & 97.79\% & 72  & 98.63\% & 151 & 96.18\%  & 119 & 98.35\%  & 51  & 98.08\%  & 50  & 100.00\% \\
	\hspace{9mm} describing non-solution language(s): $\{\tilde{\mathcal{L}}_1, (...)\}$                                                                & 807  & 93.62\% & 470  & 97.71\% & 99  & 96.12\% & 65  & 98.48\%  & 85  & 97.70\%  & 129 & 98.47\%  & 92  & 97.87\%  \\
	\hspace{9mm}  describing non- and solution language(s): $\{\mathcal{L}, \tilde{\mathcal{L}}_1, (...)\}$                                             & 201  & 90.54\% & 121  & 96.80\% & 49  & 98.00\% & 31  & 96.88\%  & 22  & 100.00\% & 2   & 100.00\% & 17  & 89.47\%  \\
	\hspace{3mm} (M3$^*$) \Nile expression (adjusted prompts)                                                                                           & 1939 & 95.05\% & 1034 & 97.64\% & 220 & 97.35\% & 247 & 96.86\%  & 226 & 98.26\%  & 182 & 98.38\%  & 159 & 97.55\%  \\
	\hspace{9mm} describing the solution language: $\{\mathcal{L}\}$                                                                                    & 914  & 95.61\% & 443  & 97.79\% & 72  & 98.63\% & 151 & 96.18\%  & 119 & 98.35\%  & 51  & 98.08\%  & 50  & 100.00\% \\
	\hspace{9mm} describing non-solution language(s): $\{\tilde{\mathcal{L}}_1, (...)\}$                                                                & 816  & 94.66\% & 470  & 97.71\% & 99  & 96.12\% & 65  & 98.48\%  & 85  & 97.70\%  & 129 & 98.47\%  & 92  & 97.87\%  \\
	\hspace{9mm}  describing non- and solution language(s): $\{\mathcal{L}, \tilde{\mathcal{L}}_1, (...)\}$                                             & 209  & 94.14\% & 121  & 96.80\% & 49  & 98.00\% & 31  & 96.88\%  & 22  & 100.00\% & 2   & 100.00\% & 17  & 89.47\%  \\
	\midrule
	\multicolumn{15}{l}{\textbf{RQ 3}: How well can LLMs construct a formal representation that syntactically matches a natural language description?} \\
	\midrule
	\hspace{3mm} (M2) Classical representation (RE / CFG)                                                                                               & 493  & 24.17\% & 388  & 36.64\% & 144 & 63.72\% & 186 & 72.94\%  & 55  & 23.91\%  & 3   & 1.62\%   & 0   & 0.00\%   \\
	\hspace{9mm} describing the solution language: $\{\mathcal{L}\}$                                                                                    & 297  & 31.07\% & 219  & 48.34\% & 72  & 98.63\% & 123 & 78.34\%  & 24  & 19.83\%  & 0   & 0.00\%   & 0   & 0.00\%   \\
	\hspace{9mm} describing non-solution language(s): $\{\tilde{\mathcal{L}}_1, (...)\}$                                                                & 135  & 15.66\% & 121  & 25.16\% & 52  & 50.49\% & 39  & 59.09\%  & 27  & 31.03\%  & 3   & 2.29\%   & 0   & 0.00\%   \\
	\hspace{9mm}  describing non- and solution language(s): $\{\mathcal{L}, \tilde{\mathcal{L}}_1, (...)\}$                                             & 61   & 27.48\% & 48   & 38.40\% & 20  & 40.00\% & 24  & 75.00\%  & 4   & 18.18\%  & 0   & 0.00\%   & 0   & 0.00\%   \\
	\hspace{3mm} (M3) \Nile expression                                                                                                                  & 1742 & 85.39\% & 1013 & 95.66\% & 219 & 96.90\% & 240 & 94.12\%  & 220 & 95.65\%  & 177 & 95.68\%  & 157 & 96.32\%  \\
	\hspace{9mm} describing the solution language: $\{\mathcal{L}\}$                                                                                    & 800  & 83.68\% & 440  & 97.13\% & 71  & 97.26\% & 150 & 95.54\%  & 119 & 98.35\%  & 51  & 98.08\%  & 49  & 98.00\%  \\
	\hspace{9mm} describing non-solution language(s): $\{\tilde{\mathcal{L}}_1, (...)\}$                                                                & 768  & 89.10\% & 459  & 95.43\% & 99  & 96.12\% & 62  & 93.94\%  & 82  & 94.25\%  & 124 & 94.66\%  & 92  & 97.87\%  \\
	\hspace{9mm}  describing non- and solution language(s): $\{\mathcal{L}, \tilde{\mathcal{L}}_1, (...)\}$                                             & 174  & 78.38\% & 114  & 91.20\% & 49  & 98.00\% & 28  & 87.50\%  & 19  & 86.36\%  & 2   & 100.00\% & 16  & 84.21\%  \\
	\hspace{3mm} (M3$^*$) \Nile expression (adjusted prompts)                                                                                           & 1845 & 90.44\% & 1013 & 95.66\% & 219 & 96.90\% & 240 & 94.12\%  & 220 & 95.65\%  & 177 & 95.68\%  & 157 & 96.32\%  \\
	\hspace{9mm} describing the solution language: $\{\mathcal{L}\}$                                                                                    & 875  & 91.53\% & 440  & 97.13\% & 71  & 97.26\% & 150 & 95.54\%  & 119 & 98.35\%  & 51  & 98.08\%  & 49  & 98.00\%  \\
	\hspace{9mm} describing non-solution language(s): $\{\tilde{\mathcal{L}}_1, (...)\}$                                                                & 782  & 90.72\% & 459  & 95.43\% & 99  & 96.12\% & 62  & 93.94\%  & 82  & 94.25\%  & 124 & 94.66\%  & 92  & 97.87\%  \\
	\hspace{9mm}  describing non- and solution language(s): $\{\mathcal{L}, \tilde{\mathcal{L}}_1, (...)\}$                                             & 188  & 84.68\% & 114  & 91.20\% & 49  & 98.00\% & 28  & 87.50\%  & 19  & 86.36\%  & 2   & 100.00\% & 16  & 84.21\%  \\
	\bottomrule
    \end{tabular}
    }
	\caption{Detailed data for the evaluation of RQs 1 to 3. Overall performance, performance for regular languages R1 -- R3 and performance for context-free languages C1 -- C2 is shown. For RQs 2 and 3, the LLM was asked to construct a semantically and syntactically matching formal representation (RE, CFG, \Nile). For RQ~1, in addition to formally checking equivalence of the LLM output to the solution language, the LLM was also asked to judge equivalence directly. Only design data (languages R1 -- R3, C1, C2) has been used for conceptual and prompt design. The prompt (M3$^*$) denotes a posthoc extension of prompt (M3) by two additional examples for non-covered \Nile idioms. This adjustment significantly increased the performance for R5 and C3. We further distinguish descriptions according to whether they unambiguously describe the solution language, unambiguously describe non-solution languages, or can describe a solution and a non-solution language, corresponding to Category (1), Categories (3) -- (4), and Category (2) respectively, as defined in Section \ref{section:ambiguities}. Since ambiguous submissions achieve a RQ~1 performance of 100\% by definition, the corresponding rows have been omitted. }\label{table:appendix-overall+design-data}
\end{table}

\newcommand{\fiiill}{&&&&&&&&&&&&&&&&}
\newcommand{\tableindent}{\hspace{1em}}
\newcommand{\tablesubindent}{\hspace{2.5em}}

\begin{table}[!h]
    \centering
	\def\arraystretch{1.2}
    \scalebox{0.645}{%
    \begin{tabular}{l||>{\bfseries}r>{\bfseries}rrrrrrrrrrrrr}%
	\toprule
          & \multicolumn{14}{c}{Validation data} \\
          & \multicolumn{2}{c}{\textbf{all}} & \multicolumn{2}{c}{R4}                & \multicolumn{2}{c}{R5} & \multicolumn{2}{c}{R6} & \multicolumn{2}{c}{C3} & \multicolumn{2}{c}{C4} & \multicolumn{2}{c}{C5} \\
          & $\#$                               & $\%$                                    & $\#$                     & $\%$                     & $\#$                     & $\%$                     & $\#$                       & $\%$       & $\#$  & $\%$       & $\#$  & $\%$       & $\#$  & $\%$       \\
	\midrule
	\hspace{3mm} all descriptions                                                                                                                       & 981 & 100\%   & 201 & 100\%    & 192 & 100\%   & 192 & 100\%    & 153 & 100\%    & 126 & 100\%    & 117 & 100\%    \\
	\hspace{9mm} describing the solution language: $\{\mathcal{L}\}$                                                                                    & 503 &   & 91  &    & 111 &   & 156 &    & 86  &    & 22  &    & 37  &    \\
	\hspace{9mm} describing non-solution language(s): $\{\tilde{\mathcal{L}}_1, (...)\}$                                                                & 381 &   & 78  &    & 52  &   & 26  &    & 48  &    & 100 &    & 77  &    \\
	\hspace{9mm}  describing non- and solution language(s): $\{\mathcal{L}, \tilde{\mathcal{L}}_1, (...)\}$                                             & 97  &   & 32  &    & 29  &   & 10  &    & 19  &    & 4   &    & 3   &    \\
	\midrule
	\multicolumn{15}{l}{\textbf{RQ 1}: How well can LLMs decide whether a natural language description fits to a formal language?}                     \\
	\midrule
	\hspace{3mm} (M1) Directly                                                                                                                          & 937 & 95.51\% & 196 & 97.51\%  & 179 & 93.23\% & 189 & 98.44\%  & 148 & 96.73\%  & 112 & 88.89\%  & 113 & 96.58\%  \\
	\hspace{9mm} describing the solution language: $\{\mathcal{L}\}$                                                                                    & 483 & 96.02\% & 91  & 100.00\% & 100 & 90.09\% & 153 & 98.08\%  & 83  & 96.51\%  & 21  & 95.45\%  & 35  & 94.59\%  \\
	\hspace{9mm} describing non-solution language(s): $\{\tilde{\mathcal{L}}_1, (...)\}$                                                                & 357 & 93.70\% & 73  & 93.59\%  & 50  & 96.15\% & 26  & 100.00\% & 46  & 95.83\%  & 87  & 87.00\%  & 75  & 97.40\%  \\
	\hspace{3mm} (M2) Via classical representation (RE / CFG)                                                                                           & 900 & 91.74\% & 201 & 100.00\% & 187 & 97.40\% & 184 & 95.83\%  & 126 & 82.35\%  & 122 & 96.83\%  & 80  & 68.38\%  \\
	\hspace{9mm} describing the solution language: $\{\mathcal{L}\}$                                                                                    & 424 & 84.29\% & 91  & 100.00\% & 108 & 97.30\% & 148 & 94.87\%  & 59  & 68.60\%  & 18  & 81.82\%  & 0   & 0.00\%   \\
	\hspace{9mm} describing non-solution language(s): $\{\tilde{\mathcal{L}}_1, (...)\}$                                                                & 379 & 99.48\% & 78  & 100.00\% & 50  & 96.15\% & 26  & 100.00\% & 48  & 100.00\% & 100 & 100.00\% & 77  & 100.00\% \\
	\hspace{3mm} (M3) Via \Nile                                                                                                                         & 904 & 92.15\% & 196 & 97.51\%  & 188 & 97.92\% & 190 & 98.96\%  & 100 & 65.36\%  & 114 & 90.48\%  & 116 & 99.15\%  \\
	\hspace{9mm} describing the solution language: $\{\mathcal{L}\}$                                                                                    & 430 & 85.49\% & 88  & 96.70\%  & 108 & 97.30\% & 154 & 98.72\%  & 33  & 38.37\%  & 10  & 45.45\%  & 37  & 100.00\% \\
	\hspace{9mm} describing non-solution language(s): $\{\tilde{\mathcal{L}}_1, (...)\}$                                                                & 377 & 98.95\% & 76  & 97.44\%  & 51  & 98.08\% & 26  & 100.00\% & 48  & 100.00\% & 100 & 100.00\% & 76  & 98.70\%  \\
	\hspace{3mm} (M3$^*$) Via \Nile (adjusted prompts)                                                                                                  & 945 & 96.33\% & 196 & 97.51\%  & 188 & 97.92\% & 190 & 98.96\%  & 141 & 92.16\%  & 114 & 90.48\%  & 116 & 99.15\%  \\
	\hspace{9mm} describing the solution language: $\{\mathcal{L}\}$                                                                                    & 471 & 93.64\% & 88  & 96.70\%  & 108 & 97.30\% & 154 & 98.72\%  & 74  & 86.05\%  & 10  & 45.45\%  & 37  & 100.00\% \\
	\hspace{9mm} describing non-solution language(s): $\{\tilde{\mathcal{L}}_1, (...)\}$                                                                & 377 & 98.95\% & 76  & 97.44\%  & 51  & 98.08\% & 26  & 100.00\% & 48  & 100.00\% & 100 & 100.00\% & 76  & 98.70\%  \\
	\midrule
	\multicolumn{15}{l}{\textbf{RQ 2}: How well can LLMs construct a formal representation that semantically matches a natural language description?}  \\
	\midrule
	\hspace{3mm} (M2) Classical representation (RE / CFG)                                                                                               & 698 & 71.15\% & 125 & 62.19\%  & 177 & 92.19\% & 182 & 94.79\%  & 99  & 64.71\%  & 86  & 68.25\%  & 29  & 24.79\%  \\
	\hspace{9mm} describing the solution language: $\{\mathcal{L}\}$                                                                                    & 424 & 84.29\% & 91  & 100.00\% & 108 & 97.30\% & 148 & 94.87\%  & 59  & 68.60\%  & 18  & 81.82\%  & 0   & 0.00\%   \\
	\hspace{9mm} describing non-solution language(s): $\{\tilde{\mathcal{L}}_1, (...)\}$                                                                & 195 & 51.18\% & 5   & 6.41\%   & 42  & 80.77\% & 25  & 96.15\%  & 29  & 60.42\%  & 65  & 65.00\%  & 29  & 37.66\%  \\
	\hspace{9mm}  describing non- and solution language(s): $\{\mathcal{L}, \tilde{\mathcal{L}}_1, (...)\}$                                             & 79  & 81.44\% & 29  & 90.63\%  & 27  & 93.10\% & 9   & 90.00\%  & 11  & 57.89\%  & 3   & 75.00\%  & 0   & 0.00\%   \\
	\hspace{3mm} (M3) \Nile expression                                                                                                                  & 847 & 86.34\% & 192 & 95.52\%  & 187 & 97.40\% & 187 & 97.40\%  & 74  & 48.37\%  & 102 & 80.95\%  & 105 & 89.74\%  \\
	\hspace{9mm} describing the solution language: $\{\mathcal{L}\}$                                                                                    & 430 & 85.49\% & 88  & 96.70\%  & 108 & 97.30\% & 154 & 98.72\%  & 33  & 38.37\%  & 10  & 45.45\%  & 37  & 100.00\% \\
	\hspace{9mm} describing non-solution language(s): $\{\tilde{\mathcal{L}}_1, (...)\}$                                                                & 337 & 88.45\% & 75  & 96.15\%  & 51  & 98.08\% & 25  & 96.15\%  & 32  & 66.67\%  & 88  & 88.00\%  & 66  & 85.71\%  \\
	\hspace{9mm}  describing non- and solution language(s): $\{\mathcal{L}, \tilde{\mathcal{L}}_1, (...)\}$                                             & 80  & 82.47\% & 29  & 90.63\%  & 28  & 96.55\% & 8   & 80.00\%  & 9   & 47.37\%  & 4   & 100.00\% & 2   & 66.67\%  \\
	\hspace{3mm} (M3$^*$) \Nile expression (adjusted prompts)                                                                                           & 905 & 92.25\% & 192 & 95.52\%  & 186 & 96.88\% & 187 & 97.40\%  & 133 & 86.93\%  & 102 & 80.95\%  & 105 & 89.74\%  \\
	\hspace{9mm} describing the solution language: $\{\mathcal{L}\}$                                                                                    & 471 & 93.64\% & 88  & 96.70\%  & 108 & 97.30\% & 154 & 98.72\%  & 74  & 86.05\%  & 10  & 45.45\%  & 37  & 100.00\% \\
	\hspace{9mm} describing non-solution language(s): $\{\tilde{\mathcal{L}}_1, (...)\}$                                                                & 346 & 90.81\% & 75  & 96.15\%  & 51  & 98.08\% & 25  & 96.15\%  & 41  & 85.42\%  & 88  & 88.00\%  & 66  & 85.71\%  \\
	\hspace{9mm}  describing non- and solution language(s): $\{\mathcal{L}, \tilde{\mathcal{L}}_1, (...)\}$                                             & 88  & 90.72\% & 29  & 90.63\%  & 27  & 93.10\% & 8   & 80.00\%  & 18  & 94.74\%  & 4   & 100.00\% & 2   & 66.67\%  \\
	\midrule
	\multicolumn{15}{l}{\textbf{RQ 3}: How well can LLMs construct a formal representation that syntactically matches a natural language description?} \\
	\midrule
	\hspace{3mm} (M2) Classical representation (RE / CFG)                                                                                               & 105 & 10.70\% & 0   & 0.00\%   & 44  & 22.92\% & 60  & 31.25\%  & 0   & 0.00\%   & 1   & 0.79\%   & 0   & 0.00\%   \\
	\hspace{9mm} describing the solution language: $\{\mathcal{L}\}$                                                                                    & 78  & 15.51\% & 0   & 0.00\%   & 26  & 23.42\% & 52  & 33.33\%  & 0   & 0.00\%   & 0   & 0.00\%   & 0   & 0.00\%   \\
	\hspace{9mm} describing non-solution language(s): $\{\tilde{\mathcal{L}}_1, (...)\}$                                                                & 14  & 3.67\%  & 0   & 0.00\%   & 10  & 19.23\% & 3   & 11.54\%  & 0   & 0.00\%   & 1   & 1.00\%   & 0   & 0.00\%   \\
	\hspace{9mm}  describing non- and solution language(s): $\{\mathcal{L}, \tilde{\mathcal{L}}_1, (...)\}$                                             & 13  & 13.40\% & 0   & 0.00\%   & 8   & 27.59\% & 5   & 50.00\%  & 0   & 0.00\%   & 0   & 0.00\%   & 0   & 0.00\%   \\
	\hspace{3mm} (M3) \Nile expression                                                                                                                  & 729 & 74.31\% & 178 & 88.56\%  & 113 & 58.85\% & 183 & 95.31\%  & 67  & 43.79\%  & 88  & 69.84\%  & 100 & 85.47\%  \\
	\hspace{9mm} describing the solution language: $\{\mathcal{L}\}$                                                                                    & 360 & 71.57\% & 81  & 89.01\%  & 55  & 49.55\% & 151 & 96.79\%  & 31  & 36.05\%  & 6   & 27.27\%  & 36  & 97.30\%  \\
	\hspace{9mm} describing non-solution language(s): $\{\tilde{\mathcal{L}}_1, (...)\}$                                                                & 309 & 81.10\% & 71  & 91.03\%  & 45  & 86.54\% & 25  & 96.15\%  & 27  & 56.25\%  & 79  & 79.00\%  & 62  & 80.52\%  \\
	\hspace{9mm}  describing non- and solution language(s): $\{\mathcal{L}, \tilde{\mathcal{L}}_1, (...)\}$                                             & 60  & 61.86\% & 26  & 81.25\%  & 13  & 44.83\% & 7   & 70.00\%  & 9   & 47.37\%  & 3   & 75.00\%  & 2   & 66.67\%  \\
	\hspace{3mm} (M3$^*$) \Nile expression (adjusted prompts)                                                                                           & 832 & 84.81\% & 178 & 88.56\%  & 153 & 79.69\% & 183 & 95.31\%  & 130 & 84.97\%  & 88  & 69.84\%  & 100 & 85.47\%  \\
	\hspace{9mm} describing the solution language: $\{\mathcal{L}\}$                                                                                    & 435 & 86.48\% & 81  & 89.01\%  & 87  & 78.38\% & 151 & 96.79\%  & 74  & 86.05\%  & 6   & 27.27\%  & 36  & 97.30\%  \\
	\hspace{9mm} describing non-solution language(s): $\{\tilde{\mathcal{L}}_1, (...)\}$                                                                & 323 & 84.78\% & 71  & 91.03\%  & 48  & 92.31\% & 25  & 96.15\%  & 38  & 79.17\%  & 79  & 79.00\%  & 62  & 80.52\%  \\
	\hspace{9mm}  describing non- and solution language(s): $\{\mathcal{L}, \tilde{\mathcal{L}}_1, (...)\}$                                             & 74  & 76.29\% & 26  & 81.25\%  & 18  & 62.07\% & 7   & 70.00\%  & 18  & 94.74\%  & 3   & 75.00\%  & 2   & 66.67\%  \\
	\bottomrule
    \end{tabular}
    }
	\caption{Detailed data for the evaluation of RQs 1 to 3. Performance for regular languages R4 -- R6 and for context-free languages C3 -- C5 is shown. For RQs 2 and 3, the LLM was asked to construct a semantically and syntactically matching formal representation (RE, CFG, \Nile). For RQ~1, in addition to formally checking equivalence of the LLM output to the solution language, the LLM was also asked to judge equivalence directly. Only design data (languages R1 -- R3, C1, C2, see \cref{table:appendix-overall+design-data}) has been used for conceptual and prompt design. The prompt (M3$^*$) denotes a posthoc extension of prompt (M3) by two additional examples for non-covered \Nile idioms. This adjustment significantly increased the performance for R5 and C3. We further distinguish descriptions according to whether they unambiguously describe the solution language, unambiguously describe non-solution languages, or can describe a solution and a non-solution language, corresponding to Category (1), Categories (3) -- (4), and Category (2) respectively, as defined in Section \ref{section:ambiguities}. Since ambiguous submissions achieve a RQ~1 performance of 100\% by definition, the corresponding rows have been omitted. }\label{table:appendix-validation-data}
\end{table}

\begin{table}[!h]
    \centering
	\def\arraystretch{1.2}
    \scalebox{0.715}{%
    \begin{tabular}{l||>{\bfseries}r>{\bfseries}rrrrr|>{\bfseries}r>{\bfseries}rrrrr}
	\toprule
        & \multicolumn{6}{c|}{Regular languages} & \multicolumn{6}{c}{Context-free languages} \\
        & \multicolumn{2}{c}{\textbf{all}} & \multicolumn{2}{c}{Design data} & \multicolumn{2}{c|}{Validation data} & \multicolumn{2}{c}{\textbf{all}} & \multicolumn{2}{c}{Design data} & \multicolumn{2}{c}{Validation data} \\
        & $\#$   & $\%$      & $\#$  & $\%$      & $\#$  & $\%$      & $\#$  & $\%$      & $\#$  & $\%$       & $\#$  & $\%$       \\
	\midrule
	\hspace{3mm} all descriptions                                                                                                                       & 1296 & 100\%   & 711 & 100\%   & 585 & 100\%   & 744 & 100\%   & 348 & 100\%    & 396 & 100\%    \\
	\hspace{9mm} describing the solution language: $\{\mathcal{L}\}$                                                                                    & 709  &   & 351 &   & 358 &   & 247 &   & 102 &    & 145 &    \\
	\hspace{9mm} describing non-solution language(s): $\{\tilde{\mathcal{L}}_1, (...)\}$                                                                & 412  &   & 256 &   & 156 &   & 450 &   & 225 &    & 225 &    \\
	\hspace{9mm}  describing non- and solution language(s): $\{\mathcal{L}, \tilde{\mathcal{L}}_1, (...)\}$                                             & 175  &   & 104 &   & 71  &   & 47  &   & 21  &    & 26  &    \\
	\midrule
	\multicolumn{12}{l}{\textbf{RQ 1}: How well can LLMs decide whether a natural language description fits to a formal language?}                     \\
	\midrule
	\hspace{3mm} (M1) Directly                                                                                                                          & 1240 & 95.68\% & 676 & 95.08\% & 564 & 96.41\% & 716 & 96.24\% & 343 & 98.56\%  & 373 & 94.19\%  \\
	\hspace{9mm} describing the solution language: $\{\mathcal{L}\}$                                                                                    & 677  & 95.49\% & 333 & 94.87\% & 344 & 96.09\% & 240 & 97.17\% & 101 & 99.02\%  & 139 & 95.86\%  \\
	\hspace{9mm} describing non-solution language(s): $\{\tilde{\mathcal{L}}_1, (...)\}$                                                                & 388  & 94.17\% & 239 & 93.36\% & 149 & 95.51\% & 429 & 95.33\% & 221 & 98.22\%  & 208 & 92.44\%  \\
	\hspace{3mm} (M2) Via classical representation (RE / CFG)                                                                                           & 1276 & 98.46\% & 704 & 99.02\% & 572 & 97.78\% & 637 & 85.62\% & 309 & 88.79\%  & 328 & 82.83\%  \\
	\hspace{9mm} describing the solution language: $\{\mathcal{L}\}$                                                                                    & 694  & 97.88\% & 347 & 98.86\% & 347 & 96.93\% & 143 & 57.89\% & 66  & 64.71\%  & 77  & 53.10\%  \\
	\hspace{9mm} describing non-solution language(s): $\{\tilde{\mathcal{L}}_1, (...)\}$                                                                & 407  & 98.79\% & 253 & 98.83\% & 154 & 98.72\% & 447 & 99.33\% & 222 & 98.67\%  & 225 & 100.00\% \\
	\hspace{3mm} (M3) Via \Nile                                                                                                                         & 1273 & 98.23\% & 699 & 98.31\% & 574 & 98.12\% & 677 & 90.99\% & 347 & 99.71\%  & 330 & 83.33\%  \\
	\hspace{9mm} describing the solution language: $\{\mathcal{L}\}$                                                                                    & 692  & 97.60\% & 342 & 97.44\% & 350 & 97.77\% & 181 & 73.28\% & 101 & 99.02\%  & 80  & 55.17\%  \\
	\hspace{9mm} describing non-solution language(s): $\{\tilde{\mathcal{L}}_1, (...)\}$                                                                & 406  & 98.54\% & 253 & 98.83\% & 153 & 98.08\% & 449 & 99.78\% & 225 & 100.00\% & 224 & 99.56\%  \\
	\hspace{3mm} (M3$^*$) Via \Nile (adjusted prompts)                                                                                                  & 1273 & 98.23\% & 699 & 98.31\% & 574 & 98.12\% & 718 & 96.51\% & 347 & 99.71\%  & 371 & 93.69\%  \\
	\hspace{9mm} describing the solution language: $\{\mathcal{L}\}$                                                                                    & 692  & 97.60\% & 342 & 97.44\% & 350 & 97.77\% & 222 & 89.88\% & 101 & 99.02\%  & 121 & 83.45\%  \\
	\hspace{9mm} describing non-solution language(s): $\{\tilde{\mathcal{L}}_1, (...)\}$                                                                & 406  & 98.54\% & 253 & 98.83\% & 153 & 98.08\% & 449 & 99.78\% & 225 & 100.00\% & 224 & 99.56\%  \\
	\midrule
	\multicolumn{12}{l}{\textbf{RQ 2}: How well can LLMs construct a formal representation that semantically matches a natural language description?}  \\
	\midrule
	\hspace{3mm} (M2) Classical representation (RE / CFG)                                                                                               & 1165 & 89.89\% & 681 & 95.78\% & 484 & 82.74\% & 467 & 62.77\% & 253 & 72.70\%  & 214 & 54.04\%  \\
	\hspace{9mm} describing the solution language: $\{\mathcal{L}\}$                                                                                    & 694  & 97.88\% & 347 & 98.86\% & 347 & 96.93\% & 143 & 57.89\% & 66  & 64.71\%  & 77  & 53.10\%  \\
	\hspace{9mm} describing non-solution language(s): $\{\tilde{\mathcal{L}}_1, (...)\}$                                                                & 311  & 75.49\% & 239 & 93.36\% & 72  & 46.15\% & 291 & 64.67\% & 168 & 74.67\%  & 123 & 54.67\%  \\
	\hspace{9mm}  describing non- and solution language(s): $\{\mathcal{L}, \tilde{\mathcal{L}}_1, (...)\}$                                             & 160  & 91.43\% & 95  & 91.35\% & 65  & 91.55\% & 33  & 70.21\% & 19  & 90.48\%  & 14  & 53.85\%  \\
	\hspace{3mm} (M3) \Nile expression                                                                                                                  & 1259 & 97.15\% & 693 & 97.47\% & 566 & 96.75\% & 622 & 83.60\% & 341 & 97.99\%  & 281 & 70.96\%  \\
	\hspace{9mm} describing the solution language: $\{\mathcal{L}\}$                                                                                    & 692  & 97.60\% & 342 & 97.44\% & 350 & 97.77\% & 181 & 73.28\% & 101 & 99.02\%  & 80  & 55.17\%  \\
	\hspace{9mm} describing non-solution language(s): $\{\tilde{\mathcal{L}}_1, (...)\}$                                                                & 400  & 97.09\% & 249 & 97.27\% & 151 & 96.79\% & 407 & 90.44\% & 221 & 98.22\%  & 186 & 82.67\%  \\
	\hspace{9mm}  describing non- and solution language(s): $\{\mathcal{L}, \tilde{\mathcal{L}}_1, (...)\}$                                             & 167  & 95.43\% & 102 & 98.08\% & 65  & 91.55\% & 34  & 72.34\% & 19  & 90.48\%  & 15  & 57.69\%  \\
	\hspace{3mm} (M3$^*$) \Nile expression (adjusted prompts)                                                                                           & 1258 & 97.07\% & 693 & 97.47\% & 565 & 96.58\% & 681 & 91.53\% & 341 & 97.99\%  & 340 & 85.86\%  \\
	\hspace{9mm} describing the solution language: $\{\mathcal{L}\}$                                                                                    & 692  & 97.60\% & 342 & 97.44\% & 350 & 97.77\% & 222 & 89.88\% & 101 & 99.02\%  & 121 & 83.45\%  \\
	\hspace{9mm} describing non-solution language(s): $\{\tilde{\mathcal{L}}_1, (...)\}$                                                                & 400  & 97.09\% & 249 & 97.27\% & 151 & 96.79\% & 416 & 92.44\% & 221 & 98.22\%  & 195 & 86.67\%  \\
	\hspace{9mm}  describing non- and solution language(s): $\{\mathcal{L}, \tilde{\mathcal{L}}_1, (...)\}$                                             & 166  & 94.86\% & 102 & 98.08\% & 64  & 90.14\% & 43  & 91.49\% & 19  & 90.48\%  & 24  & 92.31\%  \\
	\midrule
	\multicolumn{12}{l}{\textbf{RQ 3}: How well can LLMs construct a formal representation that syntactically matches a natural language description?} \\
	\midrule
	\hspace{3mm} (M2) Classical representation (RE / CFG)                                                                                               & 489  & 37.73\% & 385 & 54.15\% & 104 & 17.78\% & 4   & 0.54\%  & 3   & 0.86\%   & 1   & 0.25\%   \\
	\hspace{9mm} describing the solution language: $\{\mathcal{L}\}$                                                                                    & 297  & 41.89\% & 219 & 62.39\% & 78  & 21.79\% & 0   & 0.00\%  & 0   & 0.00\%   & 0   & 0.00\%   \\
	\hspace{9mm} describing non-solution language(s): $\{\tilde{\mathcal{L}}_1, (...)\}$                                                                & 131  & 31.80\% & 118 & 46.09\% & 13  & 8.33\%  & 4   & 0.89\%  & 3   & 1.33\%   & 1   & 0.44\%   \\
	\hspace{9mm}  describing non- and solution language(s): $\{\mathcal{L}, \tilde{\mathcal{L}}_1, (...)\}$                                             & 61   & 34.86\% & 48  & 46.15\% & 13  & 18.31\% & 0   & 0.00\%  & 0   & 0.00\%   & 0   & 0.00\%   \\
	\hspace{3mm} (M3) \Nile expression                                                                                                                  & 1153 & 88.97\% & 679 & 95.50\% & 474 & 81.03\% & 589 & 79.17\% & 334 & 95.98\%  & 255 & 64.39\%  \\
	\hspace{9mm} describing the solution language: $\{\mathcal{L}\}$                                                                                    & 627  & 88.43\% & 340 & 96.87\% & 287 & 80.17\% & 173 & 70.04\% & 100 & 98.04\%  & 73  & 50.34\%  \\
	\hspace{9mm} describing non-solution language(s): $\{\tilde{\mathcal{L}}_1, (...)\}$                                                                & 384  & 93.20\% & 243 & 94.92\% & 141 & 90.38\% & 384 & 85.33\% & 216 & 96.00\%  & 168 & 74.67\%  \\
	\hspace{9mm}  describing non- and solution language(s): $\{\mathcal{L}, \tilde{\mathcal{L}}_1, (...)\}$                                             & 142  & 81.14\% & 96  & 92.31\% & 46  & 64.79\% & 32  & 68.09\% & 18  & 85.71\%  & 14  & 53.85\%  \\
	\hspace{3mm} (M3$^*$) \Nile expression (adjusted prompts)                                                                                           & 1193 & 92.05\% & 679 & 95.50\% & 514 & 87.86\% & 652 & 87.63\% & 334 & 95.98\%  & 318 & 80.30\%  \\
	\hspace{9mm} describing the solution language: $\{\mathcal{L}\}$                                                                                    & 659  & 92.95\% & 340 & 96.87\% & 319 & 89.11\% & 216 & 87.45\% & 100 & 98.04\%  & 116 & 80.00\%  \\
	\hspace{9mm} describing non-solution language(s): $\{\tilde{\mathcal{L}}_1, (...)\}$                                                                & 387  & 93.93\% & 243 & 94.92\% & 144 & 92.31\% & 395 & 87.78\% & 216 & 96.00\%  & 179 & 79.56\%  \\
	\hspace{9mm}  describing non- and solution language(s): $\{\mathcal{L}, \tilde{\mathcal{L}}_1, (...)\}$                                             & 147  & 84.00\% & 96  & 92.31\% & 51  & 71.83\% & 41  & 87.23\% & 18  & 85.71\%  & 23  & 88.46\%  \\
	\bottomrule
    \end{tabular}
    }
	\caption{Detailed data for the evaluation of RQs 1 to 3. Accumulated performance for regular languages and context-free languages (overall and split into design data and validation data as indicated in \cref{table:appendix-overall+design-data,table:appendix-validation-data}) is shown. For RQs 2 and 3, the LLM was asked to construct a semantically and syntactically matching formal representation (RE, CFG, \Nile). For RQ~1, in addition to formally checking equivalence of the LLM output to the solution language, the LLM was also asked to judge equivalence directly. Only design data (languages R1 -- R3, C1, C2, see \cref{table:appendix-overall+design-data}) has been used for conceptual and prompt design. The prompt (M3$^*$) denotes a posthoc extension of prompt (M3) by two additional examples for non-covered \Nile idioms. This adjustment significantly increased the performance for R5 and C3. We further distinguish descriptions according to whether they unambiguously describe the solution language, unambiguously describe non-solution languages, or can describe a solution and a non-solution language, corresponding to Category (1), Categories (3) -- (4), and Category (2) respectively, as defined in Section \ref{section:ambiguities}. Since ambiguous submissions achieve a RQ~1 performance of 100\% by definition, the corresponding rows have been omitted. }\label{table:appendix-accumulated-data}
\end{table}

\clearpage
\section{Prompts}\label{section:prompts}

In the following prompts, \texttt{\%ALPH\%} is replaced by \texttt{\{a, b, c\}} for tasks (R5), (R6) and (C5), and \texttt{\{a, b\}} for all other tasks.

\subsection{M1: Deciding equivalence}

\begin{lstlisting}
# Instructions

* You are given a German description of a formal language submitted by a student, together with a German description of the language they were attempting to describe.
* The descriptions may match even if they don't use the same wording.
* Think about your solution. Then return the exact string "true" if the student's description semantically matches the solution; otherwise, return the exact string "false".

# Examples

Input: <student>Wörter über dem Alphabet {a, b}, die drei as enthalten.</student><solution>Wörter über dem Alphabet {a, b}, die das Teilwort aaa enthalten.</solution>
Response: false

Input: <student>Wörter über dem Alphabet {a, b}, die drei as hintereinander enthalten.</student><solution>Wörter über dem Alphabet {a, b}, die das Teilwort aaa enthalten.</solution>
Response: true

Input: <student>beliebig viele as und bs dann drei as dann wieder beliebig viele as und bs</student><solution>Wörter über dem Alphabet {a, b}, die das Teilwort aaa enthalten.</solution>
Response: true

Input: <student>Wörter mit mindestens so vielen bs als as</student><solution>Wörter über dem Alphabet {a, b}, die zuerst beliebig viele as enthalten und danach nur noch mindestens so viele bs enthalten.</solution>
Response: false

Input: <student>Wörter über dem Alphabet {a, b}, die mit a enden oder leer sind.</student><solution>Wörter über dem Alphabet {a, b}, die nicht auf b enden.</solution>
Response: true

Input: <student>ab i</student><solution>Wörter über dem Alphabet {a, b}, die nicht auf b enden.</solution>
Response: false
\end{lstlisting}

\subsection{M2a: Conversion to RE}

\begin{lstlisting}
# Instructions

* You are given a German description of a formal language. Your task is to give a regular expression expressing the same language.
* Only use the following constructs:
  * words
  * concatenation 
  * parentheses
  * "ε" (empty word)
  * "+" (union)
  * "*" (Kleene star)
  * "∅" (empty set)
* The alphabet is implicitly fixed to %ALPH%, unless specified otherwise.
* If possible, try to maintain the structure of the input rather than making simplifications. For example, "das leere Wort oder beliebig viele as" should become "ε+a*", even though "a*" is a more clear and concise expression.
* Think about your solution. Then return the bare regular expression as your output without any additional comments.
* If the input is not a language description or describes a non-regular language, return the exact string "ERR".

# Examples

Input: Wörter über dem Alphabet {a, b}, die genau drei as enthalten.
Response: a*ba*ba*ba*

Input: Wörter die mit dem selben Symbol aufhören und anfangen
Response: a(a+b)*a+b(a+b)*b

Input: Mindestens einfache Wiederholung von ab
Response: ab(ab)*

Input: ab i
Response: ERR
\end{lstlisting}

\subsection{M2b: Conversion to CFG}

\begin{lstlisting}
# Instructions

* You are given a German description of a formal language. Your task is to give a context-free grammar expressing the same language.
* Use the syntax found in the examples.
* The starting symbol should be named S and be the first rule.
* The alphabet is implicitly fixed to %ALPH%, unless specified otherwise.
* If possible, try to maintain the structure of the input rather than making simplifications. For example, "ein a oder beliebig viele as" could become "S -> a | aS | ε" rather than "S -> aS | ε", even though the latter is more concise.
* Think about your solution. Then return the bare grammar as your output without any additional comments.
* If the input is not a language description or describes a language that is not context-free, return the exact string "ERR".

# Examples

Input: Wörter über dem Alphabet {a, b}, die mehr as als bs enthalten.
Response:
S -> aSb | bSa | SS | aS | a

Input: Wörter die mit dem selben Symbol aufhören und anfangen
Response:
S -> aAa | bAb
A -> aA | bA | ε

Input: Die leere Sprache
Response:
S -> S

Input: Wörter die mit beliebig vielen as beginnen, dann beliebig viele bs haben, und dann auf beliebig viele cs enden
Response:
S -> aS | A
A -> bA | B
B -> cB | ε

Input: ab i
Response: ERR
\end{lstlisting}

\subsection{M3: Conversion to \Nile}

\begin{lstlisting}
# Instructions

* Your task is to convert German descriptions of formal languages into a formal representation called NILE.
* NILE expressions are built from the operators from propositional logic (&, |, !, ...), concatenation (;), number predicates (=, >=, ...), and many additional constructs (REP, HAS, BEGIN, ...). Only use constructions present in the examples.
* The alphabet is implicitly fixed to %ALPH%, unless specified otherwise.
* If possible, maintain the structure of the input rather than making simplifications. Examples:
  * "beginnt mit mindestens einem a" should become "BEGIN(>=1,a)" (instead of BEGIN(a)).
  * "enthält genau ein a und beliebig viele bs" should become "HAS(=1,a)&HAS(>=0,b)" (instead of HAS(=1,a)).
* Many natural-language phrasings can't easily be encoded into NILE as in the examples above. When in doubt, prioritize accurate semantics over maintaining syntactic similarity.
* Think about your solution. Then return the bare NILE expression as your output without any additional comments.
* If the input is not a language description, return the exact string "ERR".

## Basics / Propositional Logic

* NL: Die Sprache, die nur aus dem Wort "ba" besteht.
* NILE: ba

* NL: Die Sprache, die nur aus dem leeren Wort besteht.
* NILE: EPS

* NL: Die Sprache, die alle Wörter enthält.
* NILE: TRUE

* NL: Die leere Sprache.
* NILE: FALSE

* NL: Die Sprache der Wörter, die nicht das Teilwort ab enthalten.
* NILE: !HAS(ab)

* NL: Die Sprache der Wörter, die sowohl a als auch b enthalten.
* NILE: HAS(a)&HAS(b)

* NL: Die Sprache der Wörter, die a enthalten oder b enthalten.
* NILE: HAS(a)|HAS(b)

* NL: Die Sprache der Wörter, die b enthalten, wenn sie a enthalten.
* NILE: HAS(a)->HAS(b)

* NL: Die Sprache der Wörter, die genau dann a enthalten, wenn sie b enthalten.
* NILE: HAS(a)<->HAS(b)

* NL: Die Sprache der Wörter, die sowohl entweder ab enthalten oder ba enthalten, als auch bb enthalten.
* NILE: (HAS(ab)|HAS(ab))&HAS(b)
* Comment: Parentheses are necessary whenever combining different binary operators.

* NL: Die Sprache, die nur die Wörter a und b enthält.
* NILE: a|b
* Comment: Note that "&" here does not translate the meaning of "und". Compare the next example.

* NL: Die Sprache der Wörter, die sowohl a als auch b sind.
* NILE: a&b

## Concatenation

* NL: Die Sprache der Wörter, die mit a beginnen und danach nur noch genau b enthalten.
* NILE: a;b
* Comment: Although the description contains "beginnen" and "enthalten", the primitives "BEGIN" and "HAS" can't easily be used here. The underlying mechanism here is simple concatenation, so return that.

* NL: Die Sprache der Wörter, die mit a anfangen und danach nur noch beliebig viele bs enthalten.
* NILE: a;REP(b)

* NL: Die Sprache der Wörter, die a enthalten und danach b enthalten.
* NILE: HAS(a);HAS(b)

## Repetition

* NL: Die Sprache der Wörter, die aus einer beliebigen Folge aus as bestehen.
* NILE: REP(a)

* NL: Die Sprache der Wörter, die beliebige Wiederholungen von a und b sind.
* NILE: REP(a|b)
* Comment: The next two examples are synonymous with this.

* NL: Die Sprache aller Wörter über dem Alphabet {a, b}.
* NILE: ALPH(a,b):TRUE

* NL: Die Sprache der Wörter, die nur aus a und b bestehen.
* NILE: CONS(a,b)
* Comment: Same semantics as "CONS(>=0,a,>=0,b)".

* NL: Die Sprache der Wörter über dem Alphabet {a, b}, die entweder a enthalten oder zwei Zeichen lang sind.
* NILE: ALPH(a,b):HAS(a)|LEN=2

## Constructs based on concatenation and repetition

* NL: Die Sprache der Wörter, die das Teilwort aba enthalten.
* NILE: HAS(aba)

* NL: Die Sprache der Wörter, die mit a beginnen.
* NILE: BEGIN(a)

* NL: Die Sprache der Wörter, die auf a enden.
* NILE: END(a)

* NL: Die Sprache der Wörter, die mit a beginnen und enden.
* NILE: BEGIN(a)&END(a)

* NL: Die Sprache der Wörter, die mit ba oder aab beginnen.
* NILE: BEGIN(ba|aab)

* NL: Die Sprache der Wörter, die mit einem Teilwort beginnen, welches mit a beginnt, worauf nur noch ein Teilwort folgt, welches b enthält.
* NILE: BEGIN(a);HAS(b)
* Comment: Illustrates a possible wrong way to translate "Wörter, die mit a beginnen und danach nur noch b enthalten".

* NL: Die Sprache der Wörter, in denen a und b an abwechselnden Positionen stehen.
* NILE: ALTERNATE(a,b)

## Predicates

* NL: Die Sprache der Wörter, die genau zwei Zeichen enthalten.
* NILE: LEN=2

* NL: Die Sprache der Wörter, die nicht genau zwei Zeichen enthalten.
* NILE: LEN!=2
* Comment: "!=" isn't a new operator; it's actually just negation preceding "=". "!>" or "!EVEN" are also valid predicates.

* NL: Die Sprache der Wörter, die mindestens zwei Zeichen enthalten.
* NILE: LEN>=2

* NL: Die Sprache der Wörter, die mehr als zwei Zeichen enthalten.
* NILE: LEN>2

* NL: Die Sprache der Wörter, die höchstens zwei Zeichen enthalten.
* NILE: LEN<=2

* NL: Die Sprache der Wörter, die weniger als zwei Zeichen enthalten.
* NILE: LEN<2

* NL: Die Sprache aller Wörter mit gerader Länge.
* NILE: LEN EVEN

* NL: Die Sprache aller Wörter mit ungerader Länge.
* NILE: LEN ODD

* NL: Die Sprache aller Wörter, deren Länge durch zwei teilbar ist.
* NILE: LEN=0 MOD 2

* NL: Die Sprache aller Wörter, deren Länge nicht größer als zwei ist.
* NILE: LEN!>2

* NL: Die Sprache aller Wörter mit einer Länge von 2 bis einschließlich 5.
* NILE: LEN IS(>=2&<=5)
* Comment: The "IS" is necessary with free-standing conjunctions/disjunctions of predicates for syntax reasons, as "LEN>=2&<=5" would group into "(LEN>=2)&(<=5)", and "LEN(>=2&<=5)" would try to pass the predicate as an argument.

## Repetition with predicates

* NL: Die Sprache der Wörter, die aus einer mindestens einfachen Wiederholung des Wortes ab bestehen.
* NILE: REP(>=1,ab)

* NL: Die Sprache der Wörter, die aus einer Folge aus mindestens null as bestehen.
* NILE: REP(>=0,a)

* NL: Die Sprache der Wörter, die aus dem Teilwort ab bestehen, welches mindestens einmal und gerade oft wiederholt wird.
* NILE: REP(>=1&EVEN,ab)
* Comment: Here, since the predicate isn't free-standing, the "IS" isn't necessary.

* NL: Die Sprache der Wörter, die aus exakt einem a und beliebig vielen bs bestehen.
* NILE: CONS(=1,a,>=0,b)

* NL: Die Sprache der Wörter, die eine ungeraden Anzahl von as enthalten.
* NILE: HAS(ODD,a)
* Comment: HAS/COUNT count in terms of character positions, so "aaa" is considered to contain the subword "aa" twice, even though it's only possible to "pack" one.

* NL: Die Sprache der Wörter, die beliebig viele as enthalten.
* NILE: HAS(>=0,a)

* NL: Die Sprache der Wörter, die mindestens ein a enthalten.
* NILE: HAS(>=1,a)

* NL: Die Sprache der Wörter, die mit genau 2 as beginnen.
* NILE: BEGIN(=2,a)

* NL: Die Sprache der Wörter, die mit mindestens einem a beginnen.
* NILE: BEGIN(>=1,a)

* NL: Die Sprache der Wörter, die auf mindestens ein a enden.
* NILE: END(>=1,a)

* NL: Die Sprache der Wörter, wo die Anzahl der as und bs zusammen höchstens 2 ist.
* NILE: COUNT(a|b)<=2

* NL: Die Sprache der Wörter, wo die Anzahl der as gleich der Wortlänge ist.
* NILE: COUNT(a)=LEN

* NL: Die Sprache der Wörter mit höchstens doppelt so vielen bs wie as.
* NILE: COUNT(b)<=2*COUNT(a)

## Quantification

* NL: Die Sprache der Wörter, die mit a beginnen und danach nur noch i bs enthalten, wobei i eine beliebige natürliche Zahl ist.
* NILE: EX(i):a;REP(=i,b)

* NL: Die Sprache der Wörter, die mit beliebig vielen as beginnen und danach nur noch gleich viele bs enthalten.
* NILE: EX(i):REP(=i,a);REP(>=i,b)

* NL: Die Sprache der Wörter, die mit mindestens einem a beginnen und mit gleich vielen bs enden.
* NILE: EX(i,>=1):BEGIN(=i,a)&END(=i,a)

## Indexing

* NL: Die Sprache der Wörter, deren drittes Zeichen ein a ist.
* NILE: AT(3,a)
* Comment: Indices are 1-based in keeping with natural language.

* NL: Die Sprache der Wörter, wo an allen geraden Positionen ein a steht.
* NILE: FORALL(i,POS):i EVEN->AT(i,a)
* Comment: The "POS" predicate is true for all word positions (i.e. 1 to LEN). It is necessary here because the claim is evidently not true for out-of-bounds indices like 0.

* NL: Die Sprache der Wörter gerader Länge, in deren erster Hälfte keine as vorkommen.
* NILE: EX(i):LEN=2*i&RANGE(1,i,HAS(=0,a))
* Comment: RANGE is inclusive.

* NL: Die Sprache der Wörter, die aus gleich vielen ds und bs bestehen und wo in jedem Präfix mindestens so viele ds wie bs vorkommen.
* NILE: CONS(d,b)&COUNT(d)=COUNT(b)&(FORALL(i,POS):RANGE(1,i,COUNT(d)>=COUNT(b)))

## Word operations

* NL: Die Sprache der Palindrome.
* NILE: PALINDROME

* NL: Die Sprache der Wörter, die aus einer mindestens dreifachen Wiederholung eines beliebigen Wortes bestehen.
* NILE: EX(w):REP(>=3,w)
* Comment: Whether a variable should refer to a word or a number is determined from context.

* NL: Die Sprache der Wörter, die mit einem beliebigen Teilwort beginnen, worauf danach nur noch das selbe Teilwort rückwärts gelesen folgt.
* NILE: EX(w):w;REVERSE(w)

* NL: Die Sprache der Wörter, die rückwärts gelesen mit ab beginnen.
* NILE: REVERSE(BEGIN(ab))
\end{lstlisting}
\fi

\end{document}